\title{On the Strong Metric Dimension of directed co-graphs}
\author{Yannick Schmitz \and Egon Wanke}
\institute{Heinrich-Heine-Universit\"at D\"usseldorf, Germany\\ \email{yannick.schmitz@hhu.de, egon.wanke@hhu.de}}
\newtheorem{theorem}{Theorem}
\newtheorem{definition}{Definition}
\newtheorem{lemma}{Lemma}
\newtheorem{corollary}{Corollary}
\authorrunning{Schmitz, Wanke}
\titlerunning{The strong metric dimension of directed co-graphs}
\newcommand{\DP}[3]{
\begin{center}
\begin{tabularx}{.75\textwidth}{lL}
\hline\hline
\multicolumn{2}{c}{\sc{#1}} \\
\hline
\em{Instance:}& #2\\
\em{Question:}& #3 \\ 
\hline\hline
\end{tabularx}
\end{center}
}
\newcommand{\End}{\State {\bf end}}
\newcommand{\GSR}{G_{\rm SR}}
\newcommand{\SMD}{\rm smd}
\newcommand{\GSRP}{G'_{\rm SR}}
\newcommand{\GSRC}{\overline{G_{\rm SR }}}
\newcommand{\MDT}{\,{\rm MDT}\,}
\newcommand{\MDF}{\,{\rm MDF}\,}
\newcommand{\MMDT}{\,{\rm MMDT}\,}
\newcommand{\UVWPATH}[3]{$#1\kern-3pt\shortrightarrow\kern-3pt#2\kern-3pt\shortrightarrow\kern-3pt#3$-path}
\newcommand{\union}{\hbox{\small$\cup$}\,}
\newcommand{\join}{\hbox{$\times$}\,}
\newcommand{\rjoin}{\raise1pt\hbox{\scriptsize$\gg$}\,}
\newcolumntype{L}{>{\raggedright\arraybackslash}X}
\newcolumntype{R}{>{\raggedleft\arraybackslash}X}
\newcolumntype{C}{>{\centering\arraybackslash}X}
\begin{document}

\maketitle

\begin{abstract}
Let $G$ be a strongly connected directed graph and $u,v,w\in V(G)$ be three vertices. Then $w$ {\em strongly resolves} $u$ to $v$ if there is a shortest $u$-$w$-path containing $v$ or a shortest $w$-$v$-path containing $u$. A set $R\subseteq V(G)$ of vertices is a {\em strong resolving set} for a directed graph $G$ if for every pair of vertices $u,v\in V(G)$ there is at least one vertex in $R$ that strongly resolves $u$ to $v$ and at least one vertex in $R$ that strongly resolves $v$ to $u$. The distances of the vertices of $G$ to and from the vertices of a strong resolving set $R$ uniquely define the connectivity structure of the graph. The {\em Strong Metric Dimension} of a directed graph $G$ is the size of a smallest strong resolving set for $G$. The decision problem {\sc Strong Metric Dimension} is the question whether $G$ has a strong resolving set of size at most $r$, for a given directed graph $G$ and a given number $r$. In this paper we study undirected and directed co-graphs and introduce linear time algorithms for {\sc Strong Metric Dimension}. These algorithms can also compute strong resolving sets for co-graphs in linear time.
\end{abstract}

\section{Introduction}

The strong metric dimension is a modified version of the metric dimension. Both terms were first defined and analysed for undirected graphs. In order to embed the results worked out in this paper in the literature, we first need the precise definitions of the metric dimension for undirected and directed graphs in the normal and strong version. 

A vertex $w$ of an undirected graph $G$ {\em resolves} two vertices $u$ and $v$ if the distance between $w$ and $u$ is different from the distance between $w$ and $v$, that is, if $d(w,u) \not = d(w,v)$.
A set of vertices $R\subseteq V(G)$ is called a {\em resolving set} for $G$ if every pair of vertices of $G$ is resolved by at least one vertex of $R$.
The {\em metric dimension} of $G$, denoted by ${\rm md}(G)$, is the size of a smallest resolving set, which is also called a {\em metric basis} for $G$.

The {\em strong metric dimension} differs from the usual metric dimension only in the concept of resolving two vertices. A vertex $w$ {\em strongly resolves} $u$ and $v$ if there is either an undirected shortest path between $w$ and $u$ in $G$ containing $v$ or an undirected shortest path between $w$ and $v$ in $G$ containing $u$.
A set of vertices $R\subseteq V(G)$ is called a {\em strong resolving set} for $G$ if every pair of vertices of $G$ is strongly resolved by at least one vertex of $R$.
The {\em strong metric dimension} of $G$, denoted by ${\SMD}(G)$, is the size of a smallest strong resolving set, which is also called a {\em strong metric basis} for $G$.

Both models can be extended to directed graphs. This leads to the definition of the directed metric dimension and the strong directed metric dimension. In the directed metric dimension, the vertices $u$ and $v$ are resolved by vertex $w$ if the length of a shortest path from $w$ to $u$ and the length of a shortest path from $w$ to $v$ are different or the length of a shortest path from $u$ to $w$ and the length of a shortest path from $v$ to $w$ are different.

For the definition of the strong metric dimension for directed graphs, there are more case distinctions. Let us denote a directed shortest path from a vertex $x$ to a vertex $z$ that contains a vertex $y$ as a shortest \UVWPATH{x}{y}{z}. Two vertices $u$ and $v$ are strongly resolved from $u$ to $v$ by a vertex $w$ if there is a shortest \UVWPATH{w}{u}{v} or a shortest \UVWPATH{u}{v}{w}. They are strongly resolved if they are strongly resolved from $u$ to $v$ and strongly resolved from $v$ to $u$.

Due to the similarity of both definitions for the directed and undirected version, the decision problem {\sc Strong Metric Dimension} can be defined as follows for both cases:

\DP{Strong Metric Dimension}{A (directed or undirected) graph $G$ and a number $r$}{Is there a strong resolving set $R\subseteq V(G)$ for $G$ of size at most $r$?}

The metric dimension of undirected graphs has been introduced in the 1970s independently by Slater \cite{Sla75} and by Harary and Melter \cite{HM76}. The metric dimension finds applications in various areas, including network discovery and verification \cite{BEEHHMR05}, geographical routing protocols \cite{LA06}, combinatorial optimization \cite{ST04}, sensor networks \cite{HW12}, robot navigation \cite{KRR96} and chemistry \cite{CEJO00,Hay17}. The strong metric dimension was introduced by Seb{\"o} and Tannier \cite{ST04} who also showed, that a strong metric basis uniquely defines a graph. As both the metric and the strong metric dimension are NP-complete for general graphs \cite{KRR94, OP07}, special classes of graphs for which those problems can be solved efficiently are of interest. The directed versions of both problems are NP-complete too, as they are equivalent to their undirected version if each undirected edge is replaced by two directed edges.

There are several algorithms for computing a minimum resolving set in polynomial time for such classes, for example trees \cite{CEJO00,KRR96}, wheels \cite{HMPSCP05}, grid graphs \cite{MT84}, $k$-regular bipartite graphs \cite{BBSSS11}, amalgamation of cycles \cite{IBSS10}, cactus block graphs \cite{HEW16}, graphs with size-constrained biconnected components \cite{VHW19} and outerplanar graphs \cite{DPSL12}. The approxibility of the metric dimension has been studied for bounded degree, dense and general graphs in \cite{HSV12}. Upper and lower bounds on the metric dimension are considered in \cite{CGH08,CPZ00} for further classes of graphs.

The strong metric dimension is mainly studied on different products of graphs \cite{YKR11,KYR13,KRY14,RYKO14,KYR15,KYR16,KRY16}, but there are also efficient algorithms for other classes of graphs, for example trees \cite{ST04}, distance hereditary graphs \cite{MO11} and split graphs \cite{MOY17}. However, the strong metric dimension has not been extensively studied for directed graphs. It was only defined by Seb{\"{o}} and Tannier in \cite{ST04} in order to uniquely define the edges in a graph using the directed shortest path lengths to the vertices of a strong resolving set of vertices.

In this paper we present an algorithm that computes the strong metric dimension for undirected and directed co-graphs in linear time. This is one of the first approaches for the computation of the strong metric dimension of special directed graph classes. In Section 2 we first describe a linear time algorithm for the computation of the strong metric dimension for undirected co-graphs. This also illustrates the algorithmic idea for the linear-time algorithm presented in Section 3 for the computation of the strong metric dimension for directed co-graphs. 

\subsection{Preliminaries}
All graphs in this paper are finite, simple and if not stated otherwise, connected or strongly connected, respectively. For a (directed) graph $G=(V,E)$ with vertex set $V$ and edge set $E$, we write $V(G)$ for vertex set $V$ and $E(G)$ for edge set $E$ to reduce the number of variable names when using several graphs.

For a (strongly) connected graph $G$ and two vertices $u,v \in V(G)$ the distance $d_G(u,v)$ between $u$ and $v$ in $G$ is the length (number of edges) of a shortest path between $u$ and $v$ in $G$.
If it is clear from the context which graph is meant, we also write $d(u,v)$ instead of $d_G(u,v)$. 

\subsubsection{Undirected Graphs}
The {\em diameter} $d(G)$ of $G$ is the maximum distance between two vertices $u$ and $v$ of $G$.

Graph $G'$ is a {\em subgraph} of $G$, denoted by $G'\subseteq G$, if $V(G')\subseteq V(G)$ and $E(G') \subseteq \{\{u,v\}\ \vert\  u,v\in V(G'), \ \{u,v\}\in E(G)\}$.
Subgraph $G'$ is an {\em induced subgraph} of $G$, if $E(G') = \{\{u,v\}\ \vert\ u,v\in V(G'),\ \{u,v\}\in E(G)\}$.
We denote the subgraph of $G$ induced by the vertices of $V(G')$ by $G[V(G')]$ or $G[G']$.
For a vertex $u\in V(G)$ or a vertex set $U \subseteq V(G)$, let $G \setminus u = G[V(G) \setminus \{u\}]$ and $G \setminus U = G[V(G) \setminus U]$ be the subgraph of $G$ induces by $V(G) \setminus \{u\}$ and $V(G) \setminus U$, respectively.

For a vertex $u\in V(G)$, the {\em open neighbourhood} of $u$ is defined by $N_G(u) = \{v\ \vert\ v\in V(G),\ \{v,u\}\in E(G)\}$ and the {\em closed neighbourhood} of $u$ is defined by $N_G[u] = N_G(u)\cup \{u\}$.
If it is clear from the context which graph is meant, we also write $N(u)$ and $N[u]$ instead of $N_G(u)$ and $N_G[u]$. 

Two vertices $u,v \in V(G)$ are called {\em twins} if $N(u) = N(v)$.
Two twins $u$ and $v$ are called {\em true twins} if they are adjacent, that is, if $N[u] = N[v]$, otherwise they are called {\em false twins}.

The complement graph $\overline{G}$ of $G$ is defined by $V(\overline{G}) = V(G)$ and $E(\overline{G}) = \{\{u,v\}\ \vert\ u,v\in V(G),\ u \not=v,\ \{u,v\}\not\in E(G)\}$.  

The size of a minimum vertex cover, maximum independent set and maximum clique of an undirected graph $G$ is denoted by $\tau(G)$, $\alpha(G)$ and $\omega(G)$, respectively.

\subsubsection{Directed Graphs}
The {\em diameter} of $G$, denoted by $d(G)$, is the maximal distance from a vertex $u$ to a vertex $v$ in $G$.

Graph $G'$ is a {\em subgraph} of $G$, denoted by $G'\subseteq G$, if $V(G')\subseteq V(G)$ and $E(G') \subseteq \{(u,v)\ \vert\ u,v\in V(G'),\ (u,v) \in E(G)\}$.
Subgraph $G'$ is an {\em induced subgraph} of $G$, if $E(G') = \{(u,v)\ \vert\ u,v\in V(G'),\ (u,v) \in E(G)\}$.

For a vertex $u\in V(G)$, we denote by $N^-(u) = \{v\ \vert\ v\in V(G),\ (v,u)\in E(G)\}$ the set of {\em in-neighbours} of $u$ and by $N^+(u) = \{v\ \vert\ v\in V(G),\ (u,v)\in E(G)\}$ the set of {\em out-neighbours} of $u$.

For a vertex $u\in V(G)$ or a vertex set $U \subseteq V(G)$, let $G \setminus u = G[V(G) \setminus \{u\}]$ and $G \setminus U = G[V(G) \setminus U]$ be is the subgraph of $G$ induces by $V(G) \setminus \{u\}$ and $V(G) \setminus U$, respectively.

\section{The strong metric dimension of undirected co-graphs}\label{sec:undirected}
In this section, we provide a linear time algorithm to compute the strong metric dimension of undirected co-graphs. May and Oellerman presented in \cite{MO11} an algorithm that computes the strong metric dimension for distance hereditary graphs in polynomial time.
Since co-graphs are distance hereditary, their algorithm computes the strong metric dimension also for co-graphs in polynomial time. However, the complicated data structures used by the algorithm of May and Oellerman are not necessary to compute the strong metric dimension for co-graphs, as our simple algorithm shows.
We introduce a different approach by analysing the co-tree, which we will later use to compute the strong metric dimension of directed co-graphs.

%
%
\begin{definition}[\bf Undirected Co-Graphs]
\cite{CLB81}
\begin{itemize}
\item An undirected graph $G$ that consists of a single vertex $v$ is an undirected co-graph. The co-tree $T$ of $G$ consists of a single node $u$ associated with vertex $v$ of $G$. Node $u$ is the {\em root} of $T$.

\item If $G_1$ and $G_2$ are two undirected co-graphs, then the {\em union} of $G_1$ and $G_2$, denoted by $G_1 \union G_2$, is an undirected co-graph $G$ with vertex set $V(G_1)\cup V(G_2)$ and edge set $E(G_1)\cup E(G_2)$. Let $T_1$ and $T_2$ be the co-trees of $G_1$ and $G_2$ with root $u_1$ and $u_2$, respectively. The co-tree $T$ of $G$ is the disjoint union of $T_1$ and $T_2$ with an additional node $u$ and two additional edges $\{u,u_1\}$ and $\{u,u_2\}$. Node $u$ is the {\em root} of $T$ labelled by $\union$. Node $u_1$ and $u_2$ are {\em successor} nodes of $u$.

\item If $G_1$ and $G_2$ are two undirected co-graphs, then the {\em join} of $G_1$ and $G_2$, denoted by $G_1 \join G_2$, is an undirected co-graph with vertex set $V(G_1)\cup V(G_2)$ and edge set $E(G_1)\cup E(G_2)\cup \{\{u,v\}\ \vert\ u\in V(G_1), v\in V(G_2)\}$. Let $T_1$ and $T_2$ be the co-trees of $G_1$ and $G_2$ with root $u_1$ and $u_2$, respectively. The co-tree $T$ of $G$ is the disjoint union of $T_1$ and $T_2$ with an additional node $u$ and two additional edges $\{u,u_1\}$ and $\{u,u_2\}$. Node $u$ is the {\em root} of $T$ labelled by $\join$. Node $u_1$ and $u_2$ are {\em successor} nodes of $u$.

\end{itemize}
\end{definition}

Undirected co-graphs are precisely those graphs that do not contain an induced $P_4$, i.e. a path with four vertices, thus their diameter is at most 2.

%
%
\begin{definition}
\cite{OP07} Let $G$ be an undirected graph.
\begin{enumerate}
\item A vertex $u \in V(G)$ is {\em maximally distant} to a vertex $v\in V(G)$ in $G$ if $\forall u'\in N(u): d_G(u',v) \leq d_G(u,v)$.
\item Two vertices $u,v \in V(G)$ are {\em mutually maximally distant} if $u$ is maximally distant to $v$ and $v$ is maximally distant to $u$.
\item The {\em strong resolving graph} of $G$, denoted by $\GSR$, is an undirected graph with the same vertex set as $G$ and an edge between two vertices $u,v\in V(G)$ if an only if $u$ and $v$ are mutually maximally distant.
\end{enumerate}
\end{definition}

%
%
\begin{theorem}\label{t:SMD=VC}
\cite{OP07} Let $G$ be an undirected graph, then $\SMD(G) = \tau(\GSR)$, where $\tau(\GSR)$ is the size of a minimum vertex cover of $\GSR$.
\end{theorem}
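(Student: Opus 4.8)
The plan is to prove equality by a two-way bound, showing that a set of vertices is a strong resolving set for $G$ if and only if its complement induces an independent set in $\GSR$; equivalently, that $R$ is a strong resolving set if and only if $R$ is a vertex cover of $\GSR$. Once this equivalence is established, taking the minimum on both sides immediately gives $\SMD(G)=\tau(\GSR)$.

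\medskip

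The key technical lemma I would isolate first is that the strong-resolving relation only needs to be checked on mutually maximally distant pairs. Concretely, I would show: if $u,v$ are strongly resolved by some vertex $w$, and this is witnessed by a shortest $w$-$u$-path through $v$, then $u$ is \emph{not} maximally distant to $v$ unless $w=u$, and symmetrically; more usefully, I would prove that a vertex $w$ strongly resolves $u$ and $v$ if and only if $u$ lies on a shortest $w$-$v$-path or $v$ lies on a shortest $w$-$u$-path, and that the only pairs for which the witness must be one of the pair's own endpoints are exactly the mutually maximally distant pairs. The crucial observation is: if $u$ and $v$ are mutually maximally distant, then the \emph{only} vertices that can strongly resolve them are $u$ and $v$ themselves, so any strong resolving set must contain at least one of $u,v$; conversely, if $u,v$ are \emph{not} mutually maximally distant, then the ``outermost'' vertex of some shortest path extending past one of them strongly resolves the pair. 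This is the standard characterisation behind Theorem~\ref{t:SMD=VC}, and it is the heart of the argument.

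\medskip

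With that lemma in hand I would argue both inclusions. For the forward direction, suppose $R$ is a strong resolving set. Take any edge $\{u,v\}$ of $\GSR$; by definition $u,v$ are mutually maximally distant, so by the lemma the only vertices strongly resolving them are $u$ and $v$. Since $R$ strongly resolves this pair, $R$ must contain $u$ or $v$, so $R$ covers the edge $\{u,v\}$; as the edge was arbitrary, $R$ is a vertex cover of $\GSR$, giving $\SMD(G)\ge\tau(\GSR)$. For the reverse direction, let $C$ be a vertex cover of $\GSR$ and take any pair $u,v\in V(G)$. If $u,v$ are mutually maximally distant then $\{u,v\}$ is an edge of $\GSR$, so $C$ contains one of them and that vertex trivially strongly resolves the pair. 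If they are not mutually maximally distant, the lemma supplies an external resolving vertex; I would then need to argue that this resolving vertex can be pushed to a mutually maximally distant endpoint that lies in $C$. This requires the claim that for any $u,v$ there exist vertices $u^\ast,v^\ast$ with $\{u^\ast,v^\ast\}\in E(\GSR)$ such that any vertex covering this edge strongly resolves $u$ and $v$; covering $C$ then strongly resolves every pair, so $C$ is a strong resolving set and $\SMD(G)\le\tau(\GSR)$.

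\medskip

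The main obstacle I anticipate is this last step: extending an arbitrary pair to a mutually maximally distant pair while preserving resolvability. The natural approach is, starting from $u$, to repeatedly walk to a neighbour strictly farther from $v$ until no such neighbour exists, reaching a vertex $u^\ast$ maximally distant to $v$; then repeat from $v$ relative to $u^\ast$ to obtain $v^\ast$ mutually maximally distant to $u^\ast$. I would have to verify carefully that a shortest $u^\ast$-$v^\ast$-path passes through both $u$ and $v$ (or the appropriate one of them), so that a vertex resolving the pair $u^\ast,v^\ast$ also resolves $u,v$; this monotone-walk argument, together with the triangle inequality controlling how distances change along the walk, is where the real work lies. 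Everything else reduces to unwinding the definitions of strong resolving set and vertex cover.
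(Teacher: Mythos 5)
Your outline is essentially correct, but note that the paper itself gives no proof of this statement: it is quoted verbatim from the cited reference \cite{OP07}, so there is no in-paper argument to compare against. What you describe is precisely the standard Oellermann--Peters-Fransen argument, and both halves go through. The lower bound is exactly as you say: if $\{u,v\}\in E(\GSR)$ and some $w\notin\{u,v\}$ witnessed a shortest $w$-$u$-path through $v$, then the neighbour of $v$ on the $w$-side of that path would be strictly farther from $u$ than $v$ is, contradicting that $v$ is maximally distant to $u$ (one small slip: in your aside you claim such a witness shows $u$ is not maximally distant to $v$; it is the other direction, $v$ not maximally distant to $u$, that fails -- but since mutual maximal distance requires both, your conclusion that only $u$ and $v$ can resolve an MMD pair stands). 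For the upper bound, the monotone walk you anticipate does work, and the two verifications you flag as ``the real work'' are exactly the right ones: (i) each step $u\to u_1$ with $d(u_1,v)=d(u,v)+1$ prepends an edge to a shortest path, so by induction there is a shortest $u^\ast$-$v$-path through $u$, and concatenation gives a shortest $u^\ast$-$v^\ast$-path through both $u$ and $v$, so whichever of $u^\ast,v^\ast$ lies in the cover strongly resolves the pair; (ii) the second walk from $v$ preserves the property that $u^\ast$ is maximally distant to the current vertex, since for any neighbour $x$ of $u^\ast$ the triangle inequality gives $d(x,v_1)\le d(x,v)+1\le d(u^\ast,v)+1=d(u^\ast,v_1)$. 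Writing out these two computations would turn your sketch into a complete proof.
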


Instead of looking for a minimum vertex cover of $\GSR$, we can also look for a maximum independent set of $\GSR$ or equivalently, for a maximum clique of $\GSRC$. That is, we can calculate the strong metric dimension of $G$ by
$$\tau(\GSR) \ = \ n - \alpha(\GSR) \ = \ n - \omega(\GSRC),$$ where $n$ is the number of vertices, $\alpha(\GSR)$ is the size of a maximum independent set of $\GSR$ and $\omega(\GSRC)$ is the size of a maximum clique of $\GSRC$.

%
%
\begin{lemma}\label{t:TTadjGSR}
Let $G$ be an undirected graph.
\begin{enumerate}
\item
If two vertices $u,v \in V(G)$ of $G$ are true (false) twins in $G$, then $u$ and $v$ are false (true) twins in $\overline{G}$.
\item
If two vertices $u,v \in V(G)$ of $G$ are twins in $G$, then $u$ and $v$ are true twins in $\GSR$.
\item
If $\{u, v\}$ is an edge of $G$, then $\{u, v\}$ is an edge in $\GSR$ if and only if $u$ and $v$ are true twins in $G$.
\end{enumerate}
\end{lemma}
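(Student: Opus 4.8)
The plan is to prove each of the three statements in order, since the later parts build on the earlier ones. The key tools are the definition of twins (equality of open neighbourhoods), the definition of the complement graph, the fact that co-graphs and general undirected graphs here have small diameter, and the characterisation of the strong resolving graph $\GSR$ via mutually maximally distant vertices.

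\textbf{Part 1.} First I would unwind the definitions directly. Suppose $u,v$ are true twins in $G$, so $N_G[u]=N_G[v]$, which means $N_G(u)\cup\{u\}=N_G(v)\cup\{v\}$ and in particular $u$ and $v$ are adjacent in $G$. In the complement $\overline{G}$, the neighbourhood of any vertex $x$ is $V(G)\setminus N_G[x]$. The key step is to compute $N_{\overline{G}}(u)=V(G)\setminus N_G[u]=V(G)\setminus N_G[v]=N_{\overline{G}}(v)$, so $u$ and $v$ are twins in $\overline{G}$; and since they were adjacent in $G$ they are non-adjacent in $\overline{G}$, hence false twins. The false-twin case is symmetric: from $N_G(u)=N_G(v)$ with $u,v$ non-adjacent, one checks $N_{\overline{G}}[u]=N_{\overline{G}}[v]$ so they become true twins. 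This part is purely mechanical.

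\textbf{Part 2.} Here I would use the characterisation of $\GSR$ through maximal distance. The claim is that twins in $G$ are mutually maximally distant and adjacent in $\GSR$, hence true twins there. Let $u,v$ be twins, so $N(u)=N(v)$. I would first argue they are mutually maximally distant: for any neighbour $u'$ of $u$, I want $d(u',v)\le d(u,v)$. Since $N(u)=N(v)$, every neighbour of $u$ is a neighbour of $v$ (so at distance $1$ from $v$), whereas $d(u,v)$ is at least $1$ (if false twins, $d(u,v)=2$ by the diameter-$2$ bound, and if true twins $d(u,v)=1$ and $u'=v$ is allowed); a short case check confirms the inequality, and by symmetry $v$ is maximally distant to $u$. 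So $\{u,v\}\in E(\GSR)$. To get \emph{true} twins in $\GSR$ I would then show $N_{\GSR}(u)=N_{\GSR}(v)$: a third vertex $w$ is mutually maximally distant to $u$ iff it is to $v$, which follows because the maximal-distance condition for $w$ against $u$ and against $v$ sees the same neighbourhood $N(u)=N(v)$ and the same distances. The main obstacle in the whole lemma lives here: verifying that the mutually-maximally-distant relation to a third vertex $w$ is genuinely identical for $u$ and $v$, which requires checking that $d(\cdot,w)$ and the neighbour-comparison behave the same way for both twins; I expect this to need the twin hypothesis applied carefully in both directions (maximality of $w$ toward the twin, and of the twin toward $w$).

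\textbf{Part 3.} Finally, for an edge $\{u,v\}$ of $G$, I would show $\{u,v\}\in E(\GSR)$ iff $u,v$ are true twins. The backward direction is immediate from Part 2 (true twins are in particular twins, hence adjacent in $\GSR$). For the forward direction I would argue contrapositively: if $u,v$ are adjacent but not true twins, then $N[u]\ne N[v]$, so without loss of generality there is a vertex $u'\in N(u)\setminus N[v]$, giving $d(u',v)=2>1=d(u,v)$ (using diameter at most $2$), so $u$ is \emph{not} maximally distant to $v$, and therefore $\{u,v\}\notin E(\GSR)$. This closes the equivalence. The only subtlety is confirming the distance values via the small-diameter property, which is available for the graphs under consideration.
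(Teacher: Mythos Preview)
Your approach matches the paper's: Part~1 is dismissed there as ``Obviously'', Part~2 argues (in the reverse order from yours) that twins share the same mutually-maximally-distant relation to every third vertex and are themselves mutually maximally distant, and Part~3 unwinds the inequality $d(u',v)\le d(u,v)=1$ directly rather than contrapositively, but with the same content.

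One correction, however: the lemma is stated for an \emph{arbitrary} connected undirected graph $G$, so you may not invoke a diameter-$2$ hypothesis (that is introduced only later, in Lemma~\ref{t:GSRisGComp}). Fortunately you do not need it. In Part~3, from $u'\in N(u)\setminus N[v]$ you only require $d(u',v)\ge 2>1=d(u,v)$, which is immediate since $u'\notin N[v]$. In Part~2, false twins in a connected graph on at least two vertices have nonempty common open neighbourhood, so $d(u,v)=2$ without any global diameter assumption. With these adjustments your argument goes through for general $G$, as the lemma demands.
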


\begin{proof} ~
\begin{enumerate}
\item
Obviously.
\item Let $w \in V(G)\setminus \{u,v\}$. If $u$ and $v$ are twins, then vertex $w$ and vertex $u$ are mutually maximally distant if and only if vertex $w$ and vertex $v$ are mutually maximally distant. That is, $u$ and $v$ are twins in $\GSR$. Since twins are mutually maximally distant from each other, they are true twins in $\GSR$.
\item
Let $\{u, v\} \in E(G)$ be an edge of $G$.

$"\Rightarrow"$: From the definition of "mutually maximally distant" it follows that for every vertex $u' \in N(u)$, $d_G(u', v) \leq d_G(u, v)$ and for every vertex $v' \in N(v)$, $d_G(u,v') \leq d_G(u,v)$. Since $d_G(u,v) = 1$, every neighbour of $u$ is either $v$ or a neighbour of $v$ and every neighbour of $v$ is either $u$ or a neighbour of $u$. Therefore, $u$ and $v$ are true twins in $G$. \\

$"\Leftarrow"$: If $u$ and $v$ are true twins in $G$, every neighbour of $u$ is either $v$ or a neighbour of $v$ and every neighbour of $v$ is either $u$ or a neighbour of $u$. Therefore, for every vertex $u' \in N(u)$ $d_G(u', v) \leq d_G(u, v)$ and  for every vertex $v' \in N(v)$ $d_G(v', u) \leq d_G(v, u)$ and thus $u$ and $v$ are mutually maximally distant in $G$.
\end{enumerate}
\end{proof}

%
%
\begin{lemma}\label{t:GSRisGComp}
If $G$ is an undirected graph of diameter at most 2, then $$E(\GSR) = E(\overline{G}) \cup \{\{u,v\}\ \vert\ u,v \text{ are true twins in } G\}.$$
\end{lemma}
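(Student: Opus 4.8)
The plan is to verify the claimed set equality pairwise: for every unordered pair $\{u,v\}$ with $u\neq v$, I would check that $\{u,v\}$ lies in $E(\GSR)$ exactly when it lies in the right-hand side. Since $\GSR$ and $\overline{G}$ share the vertex set $V(G)$, and since the edges of $G$ and of $\overline{G}$ together partition all unordered pairs, the natural split is into the two cases $\{u,v\}\in E(G)$ and $\{u,v\}\notin E(G)$.

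For a pair with $\{u,v\}\notin E(G)$, the key point is that the diameter hypothesis does all the work. Such a pair is precisely an edge of $\overline{G}$, and because $u$ and $v$ are distinct and non-adjacent we get $d_G(u,v)=2$ (at least $2$ from non-adjacency, at most $2$ from the diameter bound). Then for any neighbour $u'\in N(u)$ the diameter bound gives $d_G(u',v)\leq 2 = d_G(u,v)$, so $u$ is maximally distant to $v$; by symmetry $v$ is maximally distant to $u$. Hence $u$ and $v$ are mutually maximally distant, i.e.\ $\{u,v\}\in E(\GSR)$. Conversely, every edge of $\overline{G}$ arises this way, so all edges of $\overline{G}$ are edges of $\GSR$.

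For a pair with $\{u,v\}\in E(G)$, I would simply invoke part~3 of Lemma~\ref{t:TTadjGSR}, which states that such a pair is an edge of $\GSR$ if and only if $u$ and $v$ are true twins in $G$. On the right-hand side such a pair cannot lie in $E(\overline{G})$, so it is counted exactly when $u$ and $v$ are true twins in $G$, matching $\GSR$-membership precisely.

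Finally I would combine the two cases into the stated equality. The only point requiring a little care is the consistency of the two terms on the right-hand side: true twins satisfy $N[u]=N[v]$ and are therefore adjacent in $G$, so the true-twin set consists solely of edges of $G$ and is disjoint from $E(\overline{G})$; thus no pair is double-counted and no adjacent non-twin pair is erroneously included. I do not expect a genuine obstacle here. Given Lemma~\ref{t:TTadjGSR}(3), the entire substance of the lemma is the observation that diameter at most $2$ forces every non-adjacent pair to be mutually maximally distant, which is immediate from the definition of maximal distance.
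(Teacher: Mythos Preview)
Your proposal is correct and follows essentially the same approach as the paper's proof: both split into the cases $d_G(u,v)=1$ and $d_G(u,v)=2$ (equivalently, adjacent versus non-adjacent pairs), invoke Lemma~\ref{t:TTadjGSR}(3) for the adjacent case, and observe that the diameter bound forces every non-adjacent pair to be mutually maximally distant. Your write-up is slightly more explicit in spelling out both inclusions and in noting that true twins are necessarily adjacent, but the substance is identical.
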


\begin{proof}
Since $d(G) \leq 2$, for every distinct pair of vertices $u, v\in V(G)$ either $d_G(u,v) = 1$ or $d_G(u,v) = 2$.
If two vertices $u, v\in V(G)$ are mutually maximally distant in $G$, that is if $\{u,v\} \in E(\GSR)$, then
either
\begin{enumerate}
\item
$d_G(u,v) = 2$ and $\{u,v\} \in E(\overline{G})$ or
\item $d_G(u,v) = 1$ and $\{u,v\} \not\in E(\overline{G})$ and by Lemma \ref{t:TTadjGSR}.3, $u$ and $v$ are true twins.
\end{enumerate}
If $\{u,v\} \in E(\overline{G})$ then $\{u,v\} \not\in E(G)$ and $d_G(u,v) = 2$ and $u,v$ are mutually maximally distant in $G$ and thus $\{u,v\} \in E(\GSR)$. 
\end{proof}

Since the complement $\overline{G}$ of a co-graph $G$ is a co-graph by the definition of co-graphs and since true twins of $G$ are false twins in $\overline{G}$ by Lemma \ref{t:TTadjGSR}.1 and since the insertion of edges between false twins with non-empty neighbourhood does not increase the distances between two vertices in a graph, it follows that $\GSR$ is a co-graph if $G$ is a co-graph.

%
%
\begin{corollary}
If $G$ is an undirected co-graph, then $\GSR$ is an undirected co-graph.
\end{corollary}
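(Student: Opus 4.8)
The plan is to combine the two preceding lemmas so that $\GSR$ is expressed as a small, well-understood modification of the complement $\overline{G}$, and then to argue that this modification cannot destroy the $P_4$-free property that characterises co-graphs. Since a co-graph contains no induced $P_4$ its diameter is at most $2$, so Lemma~\ref{t:GSRisGComp} applies and gives
\[
E(\GSR) \;=\; E(\overline{G}) \;\cup\; F, \qquad F \;=\; \{\{u,v\} \mid u,v \text{ are true twins in } G\}.
\]
Because the union and join operations are interchanged under complementation, $\overline{G}$ is again a co-graph, so it suffices to show that adding the edge set $F$ to $\overline{G}$ keeps the graph a co-graph.

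Next I would reinterpret $F$ inside $\overline{G}$. By Lemma~\ref{t:TTadjGSR}.1 a pair that is true twins in $G$ is a pair of \emph{false} twins in $\overline{G}$, i.e. two non-adjacent vertices with the same open neighbourhood in $\overline{G}$. Hence $F$ consists precisely of edges joining false twins of $\overline{G}$, and adding all of $F$ amounts to turning each false-twin class of $\overline{G}$ into a clique. So the statement reduces to the following self-contained claim: inserting edges between false twins of a co-graph never creates an induced $P_4$.

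For the main step I would argue by contradiction. Suppose $\GSR$ contains an induced $P_4$ on vertices $a\!-\!b\!-\!c\!-\!d$, with edges $\{a,b\},\{b,c\},\{c,d\}$ and non-edges $\{a,c\},\{a,d\},\{b,d\}$. Since $E(\overline{G})\subseteq E(\GSR)$, every non-edge of $\GSR$ is a non-edge of $\overline{G}$; thus if none of the three path-edges lay in $F$ the same four vertices would form an induced $P_4$ in $\overline{G}$, which is impossible. So at least one path-edge belongs to $F$, and I would run through the cases according to which edges are in $F$. The two tools are that false twins in $\overline{G}$ share their neighbourhood and that being false twins is transitive. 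For instance, if $\{a,b\}\in F$ then $a$ and $b$ have the same neighbourhood in $\overline{G}$; if additionally $\{b,c\}\in E(\overline{G})$ then $c$ is a neighbour of $b$, hence of $a$, forcing $\{a,c\}\in E(\overline{G})\subseteq E(\GSR)$ and contradicting that $\{a,c\}$ is a non-edge, while if $\{b,c\}\in F$ as well then $a,b,c$ are pairwise false twins in $\overline{G}$, so $\{a,c\}\in F\subseteq E(\GSR)$, again a contradiction. The middle case $\{b,c\}\in F$ is handled the same way, and the case $\{c,d\}\in F$ follows by the symmetry $a\leftrightarrow d$, $b\leftrightarrow c$ of the $P_4$.

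I expect the main obstacle to be exactly this last step, and specifically the fact that several edges of $F$ may participate in the same putative $P_4$ simultaneously, so that one cannot simply add the edges one at a time and quote a single-edge lemma; the case analysis must use transitivity of the false-twin relation to rule out $P_4$'s whose edges come from $F$ in more than one place. An alternative that avoids the case distinctions would be to observe that a maximal set of pairwise false twins in $\overline{G}$ forms a module, that co-graphs are closed under substituting a co-graph into a module, and that replacing the empty graph on such a module by the complete graph (which is what adding $F$ does) therefore yields a co-graph; I would keep the direct $P_4$ argument as the primary route since it uses only the lemmas already established.
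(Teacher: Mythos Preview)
Your argument is correct and follows the same route as the paper: use Lemma~\ref{t:GSRisGComp} to write $E(\GSR)=E(\overline{G})\cup F$, note that $\overline{G}$ is again a co-graph and that $F$ consists of edges between false twins of $\overline{G}$ (Lemma~\ref{t:TTadjGSR}.1), and then argue that adding such edges keeps the graph a co-graph. The only difference is in how that last step is justified: the paper compresses it into a single clause (``the insertion of edges between false twins with non-empty neighbourhood does not increase the distances between two vertices''), implicitly relying on the fact that a graph is $P_4$-free precisely when every connected induced subgraph has diameter at most~$2$, whereas you spell the step out via an explicit case analysis on a hypothetical induced $P_4$. Your version is more self-contained, and the module/substitution alternative you sketch at the end is indeed the cleanest way to see why joining each false-twin class into a clique preserves being a co-graph.
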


%
%
\begin{lemma}\label{t:removeTT}
Let $G$ be an undirected graph and $u,v \in V(G)$ be two true twins of $G$, then $\SMD(G) = \SMD(G\setminus v) + 1$.
\end{lemma}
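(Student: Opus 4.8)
The plan is to reduce the claimed identity to a statement about minimum vertex covers via Theorem~\ref{t:SMD=VC}. By that theorem, $\SMD(G) = \tau(\GSR)$, and applying it to $G\setminus v$ gives $\SMD(G\setminus v) = \tau(H)$, where $H$ denotes the strong resolving graph of $G\setminus v$. So it suffices to prove $\tau(\GSR) = \tau(H) + 1$. The bridge to a twin argument comes from Lemma~\ref{t:TTadjGSR}.2: since $u$ and $v$ are true twins in $G$ they are in particular twins in $G$, hence true twins in $\GSR$; that is, $u$ and $v$ are adjacent in $\GSR$ and satisfy $N_{\GSR}[u] = N_{\GSR}[v]$.

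First I would establish a purely combinatorial fact about vertex covers: if $x$ and $y$ are true twins in an arbitrary graph $H'$, then $\tau(H') = \tau(H'\setminus y) + 1$. The inequality $\tau(H') \le \tau(H'\setminus y) + 1$ is immediate, as adding $y$ to any vertex cover of $H'\setminus y$ covers every edge incident to $y$. For the reverse inequality I would take a minimum vertex cover $C$ of $H'$ and extract a vertex cover of $H'\setminus y$ of size $|C|-1$: if $y\in C$ then $C\setminus\{y\}$ already covers all edges of $H'\setminus y$; if $y\notin C$ then the edge $\{x,y\}$ forces $x\in C$ and, since $N[x]=N[y]$, every edge of $H'\setminus y$ incident to $x$ has its other endpoint in $N(y)\subseteq C$, so $C\setminus\{x\}$ works. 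Applying this with $H'=\GSR$, $x=u$, $y=v$ yields $\tau(\GSR) = \tau(\GSR\setminus v) + 1$.

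It then remains to identify $\GSR\setminus v$ with the strong resolving graph $H$ of $G\setminus v$, i.e. to show that two vertices $a,b\in V(G)\setminus\{v\}$ are mutually maximally distant in $G$ if and only if they are mutually maximally distant in $G\setminus v$. I expect this to be the main obstacle, and it splits into two checks. The first is distance preservation, $d_G(a,b)=d_{G\setminus v}(a,b)$ for all $a,b\neq v$: any shortest $a$-$b$-path using the internal vertex $v$ can be rerouted through its true twin $u$, because the two path-neighbours of $v$ both lie in $N(u)$ (if one of them were $u$, minimality would be violated by a shortcut), and $G\setminus v\subseteq G$ gives the matching lower bound. The second is that the maximally distant relation survives the shrinking of neighbourhoods $N_{G\setminus v}(a)=N_G(a)\setminus\{v\}$. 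The only constraint that could be lost when passing from $G\setminus v$ back to $G$ is the neighbour $v$ of $a$ (when $a\sim v$); but then $a\sim u$ as well because $N[u]=N[v]$, and since $u$ and $v$ are equidistant to every other vertex, the constraint $d_G(v,b)\le d_G(a,b)$ coincides with the constraint coming from $u$, which is already verified in $G\setminus v$. The boundary case $b=u$ is handled directly from $d_G(v,u)=1\le d_G(a,u)$.

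Combining the three steps gives $\SMD(G) = \tau(\GSR) = \tau(\GSR\setminus v) + 1 = \tau(H) + 1 = \SMD(G\setminus v) + 1$, which is the claim.
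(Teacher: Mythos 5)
Your proof is correct and follows essentially the same route as the paper: identify the strong resolving graph of $G\setminus v$ with $\GSR\setminus v$, observe via Lemma~\ref{t:TTadjGSR}.2 that $u$ and $v$ are true twins (hence adjacent) in $\GSR$, and then run the add-$v$ / swap-$u$-for-$v$ vertex-cover argument to get $\tau(\GSR)=\tau(\GSR\setminus v)+1$. The only real difference is one of detail: you spell out the path-rerouting and neighbourhood-comparison arguments behind the identification $\GSRP=\GSR\setminus v$, which the paper only asserts with a brief remark, and you package the vertex-cover step as a general statement about true twins in an arbitrary graph.
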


\begin{proof}
Let $u$ and $v$ be two twins of $G$ and let $G' = G\setminus v$.
By Lemma \ref{t:TTadjGSR}.2 $u$ and $v$ are true twins in $\GSR$.
Two vertices $w_1,w_2 \in V(G) \setminus \{u,v\}$ are mutually maximally distant in $G$ if and only if they are mutually maximally distant in $G'$. Since $u$ and $v$ are true twins, for every vertex $w \in V(G) \setminus \{u,v\}$, the vertices $u$ and $w$ are mutually maximally distant in $G$ if and only of they are mutually maximally distant in $G'$. Note that this does not hold if $u$ and $v$ are false twins. This reasoning shows that $\GSRP = \GSR \setminus v$.

Every vertex cover $U'$ of $\GSRP$ can be converted into a vertex cover $U$ of $\GSR$ by adding vertex $v$ to $U'$, that is $\SMD(G')+1 \geq \SMD(G)$.
Every vertex cover $U$ of $\GSR$ contains vertex $u$ or vertex $v$ or both vertices, because $u$ and $v$ are adjacent in $\GSR$. If $U$ does not contain vertex $v$, then replace vertex $u$ in $U$ by vertex $v$. The resulting set $U$ is again a vertex cover of $\GSR$ and $U \setminus \{v\}$ is a vertex cover for $\GSRP$, which implies $\SMD(G)-1 \geq \SMD(G')$.
\end{proof}

Removing a twin in an undirected graph $G$ does not destroy the twin property of the remaining pairs of vertices.
If $u_1, u_2 \in V(G)$ are twins in $G$ and $v_1,v_2 \in V(G)$ are also twins in $G$, then $u_1, u_2$ remain twins in $G \setminus v_1$ and $G \setminus v_2$. This property is the prerequisite for the following theorem. 
 
%
%
\begin{theorem}
\label{T02}
Let $G$ be an undirected co-graph with $n$ vertices. Let $G'$ be the graph obtained by successively removing true twins from $G$, then $$\SMD(G) = n - \omega(G'),$$ where $\omega(G')$ is the size of a maximum clique in $G'$.
\end{theorem}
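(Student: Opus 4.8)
The plan is to combine the twin-removal reduction of Lemma~\ref{t:removeTT} with the structural identification of the strong resolving graph from Lemma~\ref{t:GSRisGComp}. Write $n'$ for the number of vertices of $G'$, so that the process deletes exactly $n-n'$ true twins. First I would argue that each individual deletion is a legitimate application of Lemma~\ref{t:removeTT}: at every stage we remove a vertex $v$ that is a true twin of some surviving vertex $u$, whence the strong metric dimension drops by exactly one, i.e. $\SMD(G_i) = \SMD(G_{i+1}) + 1$. Telescoping over all $n-n'$ deletions yields $\SMD(G) = \SMD(G') + (n-n')$. Here I would invoke the remark preceding the theorem, which guarantees that deleting one true twin does not destroy the twin property of the remaining pairs, so the reduction is well defined and terminates precisely when no true twins are left.

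Next I would establish that $G'$ is again a connected co-graph, so that its diameter is at most $2$ and Lemma~\ref{t:GSRisGComp} applies to it. Being $P_4$-free is hereditary, hence every induced subgraph of a co-graph is a co-graph, and $G'$ is an induced subgraph of $G$. Connectivity is preserved at each step because, when $v$ is a true twin of $u$, every neighbour of $v$ is also a neighbour of $u$, so any path routed through $v$ can be rerouted through $u$; inductively $G'$ is connected and therefore has diameter at most $2$.

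The key structural step is then to identify the strong resolving graph $\GSRP$ of $G'$. Since $G'$ contains no true twins, the true-twin contribution in Lemma~\ref{t:GSRisGComp} is empty, so $E(\GSRP) = E(\overline{G'})$, that is $\GSRP = \overline{G'}$. Applying Theorem~\ref{t:SMD=VC} together with the identities $\tau(H) = |V(H)| - \alpha(H)$ and $\alpha(\overline{G'}) = \omega(G')$ gives
$$\SMD(G') \ = \ \tau(\GSRP) \ = \ \tau(\overline{G'}) \ = \ n' - \alpha(\overline{G'}) \ = \ n' - \omega(G').$$
Substituting into the telescoped identity produces $\SMD(G) = (n' - \omega(G')) + (n - n') = n - \omega(G')$, as claimed.

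I expect the main obstacle to lie in the bookkeeping around the reduction rather than in any single hard inequality: one must be sure that the class of connected co-graphs of diameter at most $2$ is closed under true-twin deletion, so that Lemma~\ref{t:GSRisGComp} is applicable to the terminal graph $G'$, and that ``no true twins remain'' is exactly the condition making the true-twin edges of $\GSRP$ vanish. A secondary point worth recording is order-independence: since $\SMD(G)$ and $n$ are fixed, the identity forces $\omega(G')$ to be the same regardless of the order in which twins are removed, which retroactively justifies speaking of ``the'' graph $G'$.
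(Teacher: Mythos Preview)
Your proposal is correct and follows essentially the same chain of equalities as the paper's proof: telescope Lemma~\ref{t:removeTT} to get $\SMD(G)=\SMD(G')+(n-n')$, then use Theorem~\ref{t:SMD=VC} and Lemma~\ref{t:GSRisGComp} (with the true-twin term vanishing) to rewrite $\SMD(G')=\tau(\overline{G'})=n'-\omega(G')$. Your added justifications (that $G'$ remains a connected co-graph so Lemma~\ref{t:GSRisGComp} applies, and the order-independence remark) are sound elaborations of points the paper leaves implicit.
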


\begin{proof}
Let $n = \vert V(G) \vert$ and and $n' = |V(G')|$ then
$$\begin{array}{llll}
\SMD(G) & = & \SMD(G') + n - n' & \text{by repeatedly applying Lemma \ref{t:removeTT}} \\
        & = & \tau(\GSRP) + n - n' & \text{by Theorem \ref{t:SMD=VC}} \\
        & = & \tau(\overline{G'})  + n - n' & \text{by Lemma \ref{t:GSRisGComp}, because $G'$ has no true twins} \\
        & = & n' - \alpha(\overline{G'}) + n - n' & \text{because }\tau(\overline{G'}) = n' - \alpha(\overline{G'}) \\
        & = & n - \alpha(\overline{G'})& \\
        & = & n - \omega(G') &  \text{because }\alpha(\overline{G'}) = \omega(G'),
\end{array}
$$
where $\tau(G)$ and $\alpha(G)$ is the size of a minimum vertex cover and maximum independent set of a graph $G$, respectively.
\end{proof}

The strong metric dimension of an undirected co-graphs $G$ can be computed in linear time by computing the size of a maximum clique in an undirected co-graph $G$, in which all twins were removed beforehand. 
The computation of the size of a largest clique in $G$ can be carried out in linear time with a bottom-up processing of the co-tree $T$ of $G$.
A leaf gets the value $1$, an inner node $u$ labelled $\union$ gets the maximum of the values of the two successor nodes of $u$ and an inner node $u$ labelled by $\join$ gets the sum of the values of the two successor nodes of $u$.
The value of the root of $T$ is the size of a maximum clique in $G$.

The removal of true twins from an undirected co-graph can be carried out together with the bottom-up processing for the calculation of the size of a maximum clique. 
A simplest way to do this, is to use the so-called {\em canonical co-tree} in which successive union and successive join compositions are combined to one union and one join composition, respectively.
Then each path in $T$ from the root to a leaf is an alternating sequence of union and join compositions.
The canonical co-tree $T$ is not necessarily a binary tree but can also be computed in linear time \cite{HP05}.

It is easy to see, that two vertices in an undirected co-graph $G$ are true twins if and only if the two corresponding leaves in the canonical co-tree are successor nodes of the same join node. In order to compute the size of a maximum clique in $G$ without true twins, it is sufficient to count all successor nodes of a join node together as one unit only once. This is taken into account in the following Algorithm \ref{alg:undirected} using a variable $t$, which is set to a maximum of $1$. 

%
%
\begin{theorem}
\label{T03}
The strong metric dimension of undirected co-graphs can be computed in linear time.
\end{theorem}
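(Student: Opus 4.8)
The plan is to assemble the linear-time bound from the pieces already established in the text, so the proof of Theorem~\ref{T03} is essentially a synthesis rather than a new argument. By Theorem~\ref{T02}, computing $\SMD(G)$ for an undirected co-graph $G$ on $n$ vertices reduces to computing $n$ and the size $\omega(G')$ of a maximum clique in the graph $G'$ obtained from $G$ by successively removing true twins. Since $n$ is read off directly, the entire task is to show that $\omega(G')$ can be obtained in linear time, and moreover that the removal of true twins need not be performed as a separate preprocessing pass.

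First I would invoke the canonical co-tree $T$ of $G$, which by \cite{HP05} is computable in linear time and in which the compositions along any root-to-leaf path strictly alternate between $\union$ and $\join$. The key structural fact, noted just before the statement, is that two vertices of $G$ are true twins exactly when their leaves share a common $\join$-parent in $T$; this is what lets us account for the twin-removal implicitly. I would then describe the standard bottom-up dynamic program on $T$: a leaf is assigned the value $1$, a $\union$-node takes the maximum of its children's values, and a $\join$-node takes the sum of its children's values, so that the value at the root equals $\omega(G)$. The correctness of this recurrence rests on the observation that a maximum clique in $G_1 \union G_2$ lies entirely in one part, while a maximum clique in $G_1 \join G_2$ is the union of maximum cliques of the two parts.

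The only modification needed to compute $\omega(G')$ rather than $\omega(G)$ is to collapse each block of true twins to a single vertex, which, by the twin characterisation above, means counting all leaf-children hanging directly under a common $\join$-node as contributing at most one unit collectively. This is precisely what the variable $t$ (capped at $1$) in Algorithm~\ref{alg:undirected} records, so the recurrence is adjusted only at $\join$-nodes to add $t \le 1$ in place of the full count of twin leaves. I would verify that this yields exactly $\omega(G')$ by appealing to Theorem~\ref{T02}, which guarantees that deleting a true twin decreases the clique number by exactly the number of removed twins beyond the first.

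Finally, I would argue the timing. The canonical co-tree has $O(n)$ nodes and $O(n)$ edges and is built in linear time; the bottom-up traversal visits each node a constant number of times and performs $O(1)$ work per node (a maximum, a sum, and the update of $t$); and reading off the root value together with $n$ takes constant time. Hence the whole computation, including the implicit twin removal folded into the traversal, runs in $O(n)$ time, establishing the theorem. I do not expect a genuine obstacle here, since every ingredient is already in place; the one point requiring care is to confirm that the twin-collapsing adjustment at $\join$-nodes composes correctly through nested joins and unions, i.e. that removing twins at one level does not re-create twins requiring further collapsing at a higher level. This is exactly the invariant discussed before Theorem~\ref{T02} — that removing a twin preserves the twin relation among the surviving vertices — which guarantees a single bottom-up pass suffices.
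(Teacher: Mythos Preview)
Your proposal is correct and follows essentially the same route as the paper: reduce via Theorem~\ref{T02} to computing $\omega(G')$, build the canonical co-tree in linear time using \cite{HP05}, exploit the characterisation of true twins as leaves sharing a $\join$-parent, and fold the twin removal into the standard bottom-up clique recurrence by capping the leaf contribution at a $\join$-node at~$1$. The only minor slip is that the correctness of the modified recurrence for $\omega(G')$ should be argued directly from the twin characterisation in the canonical co-tree rather than ``by appealing to Theorem~\ref{T02}'', which concerns $\SMD$ and not the clique number; otherwise your synthesis matches the paper's argument.
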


%
%
\begin{algorithm}
	\caption{Strong Metric Dimension of Undirected Co-graphs}
	\label{alg:undirected}
	\begin{algorithmic}[1]
		\Statex
		\Function{smd}{co-graph $G$}
			\State $T\gets$ canonical co-tree of $G$
			\State $n\gets \vert V(G)\vert$
			\State\Return $n-$\Call{max\_twinless\_clique}{$T$}
		\EndFunction
		\End
		\Statex
		\Function {max\_twinless\_clique}{canonical co-tree $T$}
			\State{$w\gets$ root of $T$}
			\If {$w$ is labelled by $\union$}
				\State{$m\gets 0$}
				\ForAll {subtrees $T'$ at root $w$}
					\State {$k=$\Call{max\_twinless\_clique}{$T'$}}
					\If {$k > m$}
						\State {$m\gets k$}
					\EndIf
				\EndFor
				\State {\Return $m$}
			\ElsIf {$w$ is labelled by $\join$}
				\State {$s\gets 0$}		
				\State {$t\gets 0$}		
				\ForAll {subtrees $T'$ at root $w$}
					\If {root of $T'$ is a leaf}
						\State{$t\gets 1$}
					\Else
						\State {$s\gets s+$\Call{max\_twinless\_clique}{$T'$}}
					\EndIf
				\EndFor
				\State {\Return $s+t$}
			\Else
				\State {\Return $1$}
			\EndIf
		\EndFunction
		\End
	\end{algorithmic}
\end{algorithm}

The above idea for the efficient computation of the strong metric dimension can of course also be used for all other graph classes that have a diameter of at most two, are closed with respect to the removal of twins and for which a maximum clique can be efficiently found. These would be, for example, perfect graphs with a diameter of at most two or graphs of bounded clique-width and a diameter of at most two. Here is an example of such a conclusion.

\begin{corollary}
The strong metric dimension of prefect graphs of diameter at most 2 can be computed in polynomial time.
\end{corollary}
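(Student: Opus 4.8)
The plan is to instantiate the general recipe the paper has just described: reduce $\SMD(G)$ to a maximum-clique computation after deleting true twins. For this to apply to a perfect graph $G$ of diameter at most $2$, I would verify the three properties the recipe demands: diameter at most $2$ (given), closure of the class under deletion of true twins, and polynomial-time computability of a maximum clique. Once these are in place, the counting identity is copied verbatim from the proof of Theorem~\ref{T02}.

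The step I expect to be the main obstacle is closure under deleting a true twin, since I must keep the graph simultaneously perfect \emph{and} of diameter at most $2$. Perfectness is hereditary, so $G\setminus v$ is perfect for any vertex $v$ and requires no argument. The diameter bound needs a short check: if $u,v$ are true twins with $N[u]=N[v]$ and $x,y\neq v$ are joined in $G$ only by a shortest path of length $2$ through $v$, then $x,y\in N(v)$, hence $x,y\in N[u]$, so the path $x\kern1pt u\kern1pt y$ survives in $G\setminus v$; thus no distance between surviving vertices increases and $G\setminus v$ still has diameter at most $2$. I would also recall the observation made before Theorem~\ref{T02}, namely that deleting one twin does not destroy the twin relation among the remaining pairs, so the deletion process can be iterated until no true twins remain.

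Next I would transcribe the computation chain of Theorem~\ref{T02}. Let $G'$ be obtained from $G$ by successively removing true twins; by the previous paragraph $G'$ is a perfect graph of diameter at most $2$ with no true twins. Repeated application of Lemma~\ref{t:removeTT} gives $\SMD(G)=\SMD(G')+(n-n')$ with $n=|V(G)|$ and $n'=|V(G')|$. Since $G'$ has diameter at most $2$ and no true twins, Lemma~\ref{t:GSRisGComp} yields $\GSRP=\overline{G'}$, so Theorem~\ref{t:SMD=VC} gives $\SMD(G')=\tau(\overline{G'})=n'-\alpha(\overline{G'})=n'-\omega(G')$, and therefore $\SMD(G)=n-\omega(G')$.

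Finally I would account for the running time. Detecting and deleting true twins is clearly polynomial. By the weak perfect graph theorem the complement $\overline{G'}$ is again perfect, and a maximum clique of a perfect graph (equivalently a maximum independent set of its complement) is computable in polynomial time by the algorithm of Gr\"otschel, Lov\'asz and Schrijver; hence $\omega(G')=\alpha(\overline{G'})$ can be evaluated in polynomial time. Combining this with the identity $\SMD(G)=n-\omega(G')$ established above yields a polynomial-time procedure. The only genuinely new content beyond Theorem~\ref{T02} is the verification that true-twin deletion keeps the graph inside the class, in particular the diameter argument; everything else is a routine adaptation of the co-graph case.
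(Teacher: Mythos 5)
Your proposal is correct and follows essentially the same route as the paper: iterate Lemma~\ref{t:removeTT}, observe the class is closed under true-twin deletion, and compute $\omega(G')$ via the polynomial-time maximum-clique algorithm for perfect graphs. Your explicit verification that deleting a true twin preserves diameter at most~2, and your appeal to heredity of perfection rather than the absence of new holes or anti-holes, are only more careful spellings-out of steps the paper asserts without detail.
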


\begin{proof}
Analogously to Theorem \ref{T02}, the  strong metric dimension of perfect graphs of diameter at most 2 can be computed by successively removing true twins from the graph by using Lemma \ref{t:removeTT} and calculating the size of a maximum clique in the remaining graph. Since removing true twins does not create holes or anti holes, the remaining graph is still perfect and of diameter at most 2. Therefore, Lemma \ref{t:GSRisGComp} still holds where a maximum clique of a prefect graph can be computed in polynomial time.
\end{proof}

\section{The strong metric dimension of strongly connected directed co-graphs}
In this section we show how to determine the strong metric dimension of strongly connected directed co-graphs in linear time. Directed co-graphs are recursively defined as follows, see also \cite{BDGR97}:

%
%
\begin{definition}[\bf Directed Co-graphs]
~
\begin{itemize}
\item A directed graph $G$ that consists of a single vertex $v$ is a directed co-graph. The co-tree $T$ of $G$ consists of a single node $u$ associated with vertex $v$ of $G$. Node $u$ is the {\em root} of $T$.

\item If $G_1$ and $G_2$ are two directed co-graphs, then the {\em union} of $G_1$ and $G_2$, denoted by $G_1\union G_2$, is a directed co-graph $G$ with vertex set $V(G_1)\cup V(G_2)$ and edge set $E(G_1)\cup E(G_2)$. Let $T_1$ and $T_2$ be the co-trees of $G_1$ and $G_2$ with root $u_1$ and $u_2$, respectively. The co-tree $T$ of $G$ is the disjoint union of $T_1$ and $T_2$ with an additional node $u$ and two additional edges $\{u,u_1\}$ and $\{u,u_2\}$. Node $u$ is the {\em root} of $T$ labelled by $\union$. Node $u_1$ is the {\em left successor node} of $u$ and node $u_2$ is the {\em right successor node} of $u$.

\item If $G_1$ and $G_2$ are two directed co-graphs, then the {\em join} of $G_1$ and $G_2$, denoted by $G_1 \join G_2$, is a directed co-graph with vertex set $V(G_1)\cup V(G_2)$ and edge set $E(G_1)\cup E(G_2)\cup \{(u,v),(v,u)\ \vert\ u\in V(G_1), v\in V(G_2)\}$. Let $T_1$ and $T_2$ be the co-trees of $G_1$ and $G_2$ with root $u_1$ and $u_2$, respectively. The co-tree $T$ of $G$ is the disjoint union of $T_1$ and $T_2$ with an additional node $u$ and two additional edges $\{u,u_1\}$ and $\{u,u_2\}$. Node $u$ is the {\em root} of $T$ labelled by $\join$. Node $u_1$ is the {\em left successor node} of $u$ and node $u_2$ is the {\em right successor node} of $u$.

\item If $G_1$ and $G_2$ are two directed co-graphs, then the {\em directed join} of $G_1$ and $G_2$, denoted by $G_1 \rjoin G_2$, is a directed co-graph with vertex set $V(G_1)\cup V(G_2)$ and edge set $E(G_1)\cup E(G_2)\cup \{(u,v)\ \vert\ u\in V(G_1), v\in V(G_2)\}$. Let $T_1$ and $T_2$ be the co-trees of $G_1$ and $G_2$ with root $u_1$ and $u_2$, respectively. The co-tree $T$ of $G$ is the disjoint union of $T_1$ and $T_2$ with an additional node $u$ and two additional edges $\{u,u_1\}$ and $\{u,u_2\}$. Node $u$ is the {\em root} of $T$ labelled by $\rjoin$. Node $u_1$ is the {\em left successor node} of $u$ and node $u_2$ is the {\em right successor node} of $u$.

\end{itemize}
\end{definition}

%
%
\begin{definition}
\label{D04}
\cite{OP07} Let $G$ be a strongly connected directed graph and $u,v\in V(G)$.
\begin{itemize}
\item Vertex $u$ is maximally distant to vertex $v$, denoted by $u \MDT v$, if $\forall u'\in N^-(u): d(u',v)\leq d(u,v)$. 
\item Vertex $v$ is maximally distant from vertex $u$, denoted by $v \MDF u$, if $\forall v'\in N^+(v): d(u,v')\leq d(u,v)$. 
\item Vertex $u$ is mutually maximally distant to vertex $v$, denoted by $u \MMDT v$, if $u \MDT v$ and $v \MDF u$. 
\item The {\em strong resolving graph} of $G$, denoted by $\GSR$, is an undirected graph with the same vertex set as $G$ and an edge between two vertices $u,v\in V(G)$ if and only if $u$ is mutually maximally distant to vertex $v$ or $v$ is mutually maximally distant to vertex $u$, i.e., if $$(u \MMDT v) \vee (v \MMDT u).$$
\end{itemize}
\end{definition}

Note that $u \MMDT v$ does not imply $v \MMDT u$, see Figure \ref{fig:exampleCoTree}. 

%
%
\begin{figure}
\centerline{\includegraphics[width=230pt]{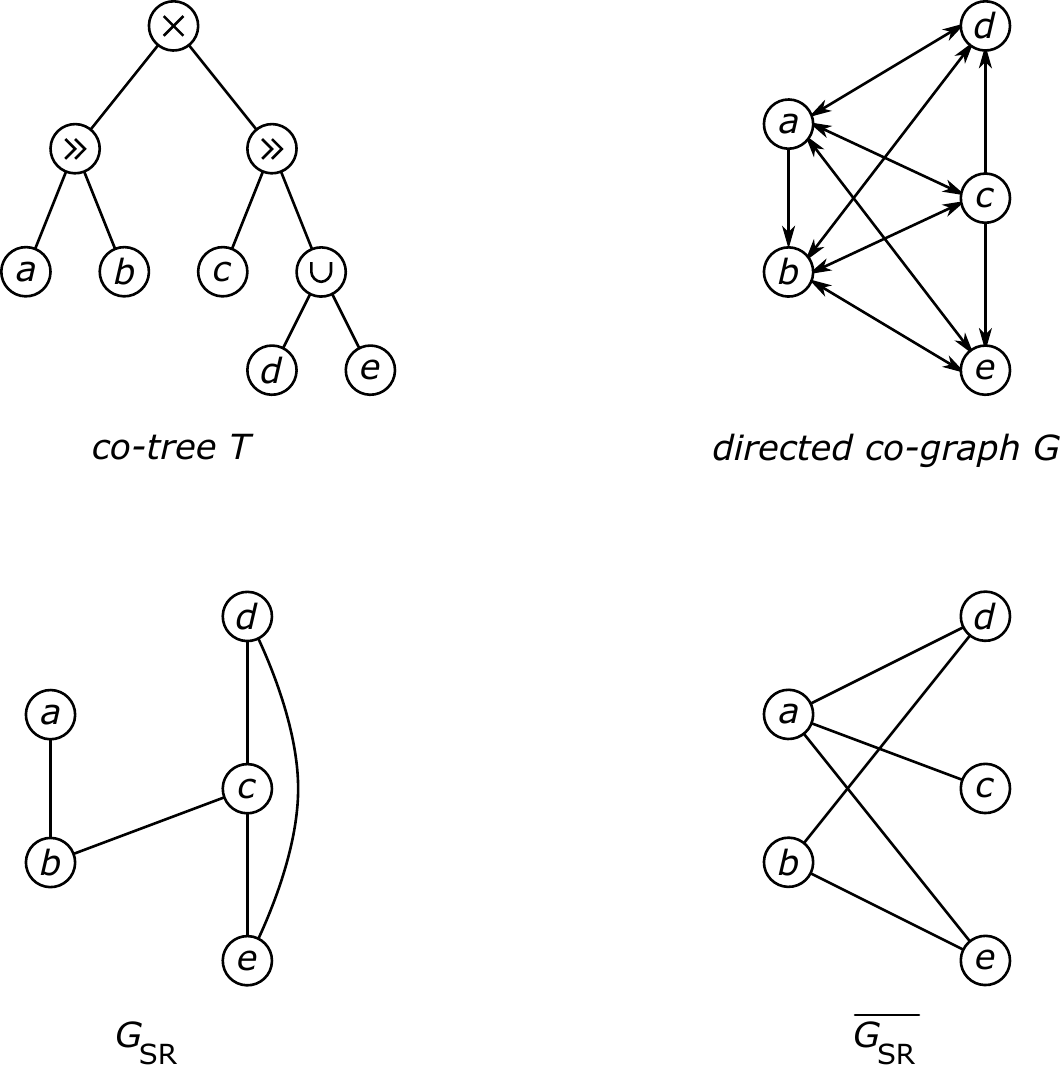}}

\caption{An example of a co-tree $T$ (top left) of a directed co-graph $G$ (top right), $\GSR$ (bottom left) and $\GSRC$ (bottom right). In this graph, the vertex $a$ is not maximally distant to vertex $c$, because $d$ and $e$ are in-neighbours of $a$, but not in-neighbours of $c$, thus $e\in N^-(a)$ and $d(e,c) > d(a,c)$. However, the vertex $c$ is maximally distant from vertex $a$, because every out-neighbour of $c$ is an out-neighbour of $a$ as well. In this example, we also have $c \MMDT b$, but not $b \MMDT c$. In this example, $\omega(\GSRC) = 2 = \alpha(\GSR)$ and $|V_G|-\alpha(\GSR) = \tau(\GSR) = 3 = \SMD(G)$.}
\label{fig:exampleCoTree}
\end{figure}

%
%
\begin{theorem}
\cite{OP07} Let $G$ be a strongly connected directed graph, then $\SMD(G) = \tau(\GSR)$.
\end{theorem}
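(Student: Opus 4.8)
The plan is to prove the stronger statement that a set $R\subseteq V(G)$ is a strong resolving set for $G$ \emph{if and only if} $R$ is a vertex cover of $\GSR$; the claimed equality $\SMD(G)=\tau(\GSR)$ is then immediate by taking a set of minimum size on each side. Throughout I would use the elementary observation that for any ordered pair $(u,v)$ the vertex $u$ itself strongly resolves $u$ to $v$ (the trivial shortest $u\to v$ path ``contains'' its endpoint $u$) and likewise resolves $v$ to $u$; hence a pair $u,v$ can only fail to be resolved in some direction by vertices outside $\{u,v\}$. The heart of the argument is therefore to identify \emph{which} pairs force one of their two vertices into $R$, and to show these are exactly the pairs joined by an edge of $\GSR$.

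First I would record the reformulation of the directed distance notions of Definition \ref{D04} in terms of shortest paths: $u\MDT v$ holds if and only if $u$ is not a proper internal vertex of any shortest directed path terminating in $v$, i.e.\ there is no $w\neq u$ with $d(w,v)=d(w,u)+d(u,v)$ and $d(w,u)\geq 1$; symmetrically $v\MDF u$ holds if and only if $v$ is not a proper internal vertex of any shortest directed path starting in $u$. Both equivalences follow by passing to the offending in-neighbour, resp.\ out-neighbour, on such a path. With this in hand the inequality $\SMD(G)\geq\tau(\GSR)$ is short: if $u\MMDT v$ and some $w\notin\{u,v\}$ strongly resolves $u$ to $v$, then either a shortest $w\to v$ path contains $u$, contradicting $u\MDT v$, or a shortest $u\to w$ path contains $v$, contradicting $v\MDF u$. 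Thus whenever $u\MMDT v$, only $u$ or $v$ can resolve $u$ to $v$; by the symmetric statement, whenever $v\MMDT u$, only $u$ or $v$ can resolve $v$ to $u$. Since every edge of $\GSR$ arises from $(u\MMDT v)\vee(v\MMDT u)$, any strong resolving set must contain an endpoint of each such edge, i.e.\ it is a vertex cover of $\GSR$.

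For the converse $\tau(\GSR)\geq\SMD(G)$ I would show every vertex cover $R$ of $\GSR$ resolves each ordered pair $(x,y)$. The key is an \emph{extension} step. Starting from $x$ I repeatedly replace the current vertex by an in-neighbour that is strictly farther from $y$; since distances are bounded by the diameter this terminates at a vertex $a$ with $a\MDT y$ and with $x$ lying on a shortest $a\to y$ path. Starting from $y$ I then repeatedly move to an out-neighbour strictly farther from $a$, terminating at a vertex $b$ with $b\MDF a$ and with $y$ on a shortest $a\to b$ path. A triangle-inequality computation using $a\MDT y$, namely $d(a',b)\leq d(a',y)+d(y,b)\leq d(a,y)+d(y,b)=d(a,b)$ for every $a'\in N^-(a)$, upgrades $a\MDT y$ to $a\MDT b$, so $a\MMDT b$ and $\{a,b\}$ is an edge of $\GSR$; moreover $d(a,b)\geq d(x,y)\geq 1$ guarantees $a\neq b$, so this is a genuine edge. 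By construction both $x$ and $y$ lie on a common shortest $a\to b$ path in the order $a,\dots,x,\dots,y,\dots,b$. Hence $a$ resolves $x$ to $y$ (the shortest $a\to y$ path contains $x$) and $b$ resolves $x$ to $y$ (the shortest $x\to b$ path contains $y$, since $d(x,b)=d(x,y)+d(y,b)$); as $R$ covers $\{a,b\}$, one of $a,b$ lies in $R$ and resolves $x$ to $y$. Running the mirror-image construction, extending backward from $y$ and forward from $x$, yields an edge of $\GSR$ certifying that some vertex of $R$ resolves $y$ to $x$, so $R$ strongly resolves the pair in both directions and is a strong resolving set.

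I expect the extension step to be the main obstacle: one must verify that the greedy walks actually terminate at mutually maximally distant endpoints and that $x$ and $y$ land on a \emph{single} shortest $a\to b$ path in the correct order, so that the covered endpoint resolves the pair in the intended direction. The degenerate situations where no extension is possible (so $a=x$ or $b=y$) need a line of comment but reduce to trivial shortest paths, and the bound $d(a,b)\geq d(x,y)\geq 1$ rules out the only way the produced edge could degenerate into a loop.
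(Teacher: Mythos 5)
Your proof is correct. Note, however, that the paper does not prove this theorem at all: it is imported verbatim from the cited reference \cite{OP07}, so there is no in-paper argument to compare against. What you have written is a complete, self-contained proof of the citation, and it follows the standard route: the characterization of strong resolving sets as exactly the vertex covers of $\GSR$. Both halves check out. For $\SMD(G)\geq\tau(\GSR)$, your reduction of ``some $w\notin\{u,v\}$ resolves $u$ to $v$'' to a one-step violation of $u\MDT v$ or $v\MDF u$ (via the predecessor of $u$ on a shortest $w$-to-$v$ path, resp.\ the successor of $v$ on a shortest $u$-to-$w$ path) is exactly right, and you correctly observe that an edge of $\GSR$ forces an endpoint into $R$ because at least one of the two directions of the pair can only be resolved by $u$ or $v$. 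For $\tau(\GSR)\geq\SMD(G)$, the two greedy extensions terminate since each step increases the relevant distance by exactly one, the order of the extensions (first backward from $x$ against $y$, then forward from $y$ against the already-fixed $a$) is the right one to make the triangle-inequality upgrade $a\MDT y\Rightarrow a\MDT b$ work, and the concatenated walk of length $d(a,x)+d(x,y)+d(y,b)=d(a,b)$ is necessarily a shortest path containing $x$ and $y$ in the required order, so whichever of $a,b$ lies in the cover resolves $x$ to $y$; the mirror construction handles the other direction. The only points worth polishing in a written version are the ones you already flag: the degenerate cases $a=x$ or $b=y$, and the remark that $d(a,b)\geq d(x,y)\geq 1$ excludes $a=b$.
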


We will now compute the strong metric dimension of directed co-graphs $G$ based on a similar idea as we used in Section \ref{sec:undirected}. Since we only consider strongly connected directed co-graphs, the last composition step is always a join operation if $G$ has more than one vertex. Instead of finding a vertex cover in $\GSR$, we determine a clique in $\GSRC$ as in Theorem \ref{T02}.

For the next two lemmas and the explanations of the algorithm for the computation of a strong resolving basis, let $G$ be a strongly connected directed co-graph and $T$ be the co-tree of $G$. Let $w$ be a node of $T$ and $T_w$ be the complete subtree of $T$ with root $w$. Let $T_l$ and $T_r$ be the left and right subtree of $T_w$ at root $w$. Let $G_w$, $G_l$ and $G_r$ be the subgraphs of $G$ induced by the vertices associated to the leafs of $T_w$, $T_l$ and $T_r$, respectively.

%
%
\begin{lemma}
\label{L04}
If node $w$ is labelled by $\union$ or $\rjoin$ then no vertex of $G_l$ is adjacent with a vertex of $G_r$ in $\GSRC$.
\end{lemma}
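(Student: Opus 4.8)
The plan is to translate the statement back to $\GSR$ itself: an edge $\{u,v\}$ exists in $\GSRC$ exactly when $\{u,v\}\notin E(\GSR)$, i.e.\ when neither $u \MMDT v$ nor $v \MMDT u$ holds. By the definition of $\GSR$ it therefore suffices to prove that \emph{every} cross pair $u\in V(G_l)$, $v\in V(G_r)$ satisfies $(u \MMDT v)\vee(v \MMDT u)$, so that $\{u,v\}\in E(\GSR)$ and consequently no such edge survives in $\GSRC$.

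First I would record that a strongly connected directed co-graph $G$ with at least two vertices has diameter at most $2$. Since $G$ is strongly connected, its root composition can only be a $\join$: a $\union$ or $\rjoin$ at the root would leave one side unreachable from the other. Hence $G = A \join B$ with arcs present in both directions between $A$ and $B$. Any two vertices on the same side that are not already adjacent are then connected by a path of length $2$ through a vertex of the opposite side, and opposite-side vertices are adjacent; thus every distance in $G$ equals $1$ or $2$.

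The technical heart is a one-line observation: if $d(x,y)=2$, then $x \MMDT y$. Indeed, $x \MDT y$ requires $d(x',y)\le d(x,y)$ for every $x'\in N^-(x)$, which is automatic once $d(x,y)=2$ is the largest possible distance; likewise $y \MDF x$ requires $d(x,y')\le d(x,y)$ for every $y'\in N^+(y)$, again automatic. So any ordered pair realising the maximal distance $2$ is mutually maximally distant in that direction.

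It then remains to read off the arc structure of a cross pair and apply this observation. Since $u$ lies in the left subtree $T_l$ and $v$ in the right subtree $T_r$, the node $w$ is their lowest common ancestor, so the arcs between $u$ and $v$ are precisely those introduced by the composition at $w$. If $w$ is labelled by $\union$, there is no arc either way, hence $d(u,v)=d(v,u)=2$ and $u \MMDT v$ follows; if $w$ is labelled by $\rjoin$, the only arc is from $u$ to $v$, so $d(v,u)=2$ and $v \MMDT u$ follows. In both cases $\{u,v\}\in E(\GSR)$, which is exactly the claim. The only step demanding genuine care is the diameter bound, since the trivialisation of the $\MDT$ and $\MDF$ conditions rests entirely on distances never exceeding $2$; once that is in hand, the remainder is a short case distinction on the label of $w$.
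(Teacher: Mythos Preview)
Your proof is correct and follows essentially the same route as the paper: establish that strongly connected directed co-graphs have diameter at most $2$, observe that $d(x,y)=2$ forces $x \MMDT y$, and then read off from the label of $w$ that at least one of the two directed distances between $u$ and $v$ equals $2$. The only difference is that in the $\rjoin$ case the paper additionally argues that $u \MMDT v$ holds (via the fact that every in-neighbour of $u$ and every out-neighbour of $v$ also receives a direct arc across), whereas you stop after obtaining $v \MMDT u$; your shortcut is valid since a single direction already yields $\{u,v\}\in E(\GSR)$.
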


\begin{proof}
For every pair of vertices $u,v \in V(G)$ in a strongly connected directed co-graph $G$, we have $d_G(u,v) \leq 2$.
That is, if $d_G(u,v) = 2$, then $u \MMDT v$.
If $w$ is labelled by $\union$, then $d_G(u,v) = 2$ and $d_G(v,u) = 2$ and thus $u \MMDT v$ and $v \MMDT u$.
If $w$ is labelled by $\rjoin$, then obviously $d_G(v,u) = 2$ and $v \MMDT u$. However, the directed join composition which creates the edge $(u,v)$, also creates an edge from every in-neighbour of $u$ to $v$ and from $u$ to very out-neighbour of $v$. That is, in $G_w$ and thus also in $G$ there is no shortest path from an in-neighbour of $u$ via $u$ to $v$ and no shortest path from $u$ via $v$ to an out-neighbour of $v$, which implies $u \MMDT v$ in $G$.
\end{proof}

Lemma \ref{L04} shows that $\GSR$ contains all edges between the vertices of $G_l$ and the vertices of $G_r$ or equivalently, there is no edge between a vertex of $G_l$ and a vertex of $G_r$ in $\GSRC$.

Consider now the cases in which $w$ is labelled by $\join$. By definition, $\GSRC$ has an edge between $u$ and $v$ if and only if
$$\neg ((u \MMDT v) \vee (v \MMDT u))$$ or equivalently,
$$\neg (u \MMDT v) \wedge \neg (v \MMDT u)$$ or equivalently,
$$(\neg (u \MDT v) \vee \neg(v \MDF u)) \wedge (\neg (v \MDT u) \vee \neg (u \MDF v))$$ or equivalently by Definition \ref{D04},
\begin{enumerate}
\item
\begin{enumerate}
\item $\neg (\forall y\in N^-(u): d(y,v)\leq d(u,v))$ or
\item $\neg (\forall x\in N^+(v): d(u,x)\leq d(u,v))$
\end{enumerate}
and
\item
\begin{enumerate}
\item $\neg (\forall x\in N^-(v): d(x,u)\leq d(v,u))$ or
\item $\neg (\forall y\in N^+(u): d(v,y)\leq d(v,u))$.
\end{enumerate}
\end{enumerate}
or equivalently because we consider a join operation,
\begin{enumerate}
\item
\begin{enumerate}
\item $\exists y \in V(G)$ such that $(y,u)\in E(G)$ and $(y,v)\notin E(G)$ or
\item $\exists x \in V(G)$ such that $(v,x)\in E(G)$ and $(u,x)\notin E(G)$
\end{enumerate}
and
\item
\begin{enumerate}
\item $\exists x \in V(G)$ such that $(x,v)\in E(G)$ and $(x,u)\notin E(G)$ or
\item $\exists y \in V(G)$ such that  $(u,y)\in E(G)$ and $(v,y)\notin E(G)$.
\end{enumerate}
\end{enumerate}

Since we consider a join operation of $G_l$ and $G_r$, it follows that in Case 1.a and Case 2.b $y \in V(G_r)$ and in Case 1.b and Case 2.a $x \in V(G_l)$, see also Figure \ref{fig:F02}.

%
%
\begin{figure}
\centerline{\includegraphics[width=\textwidth]{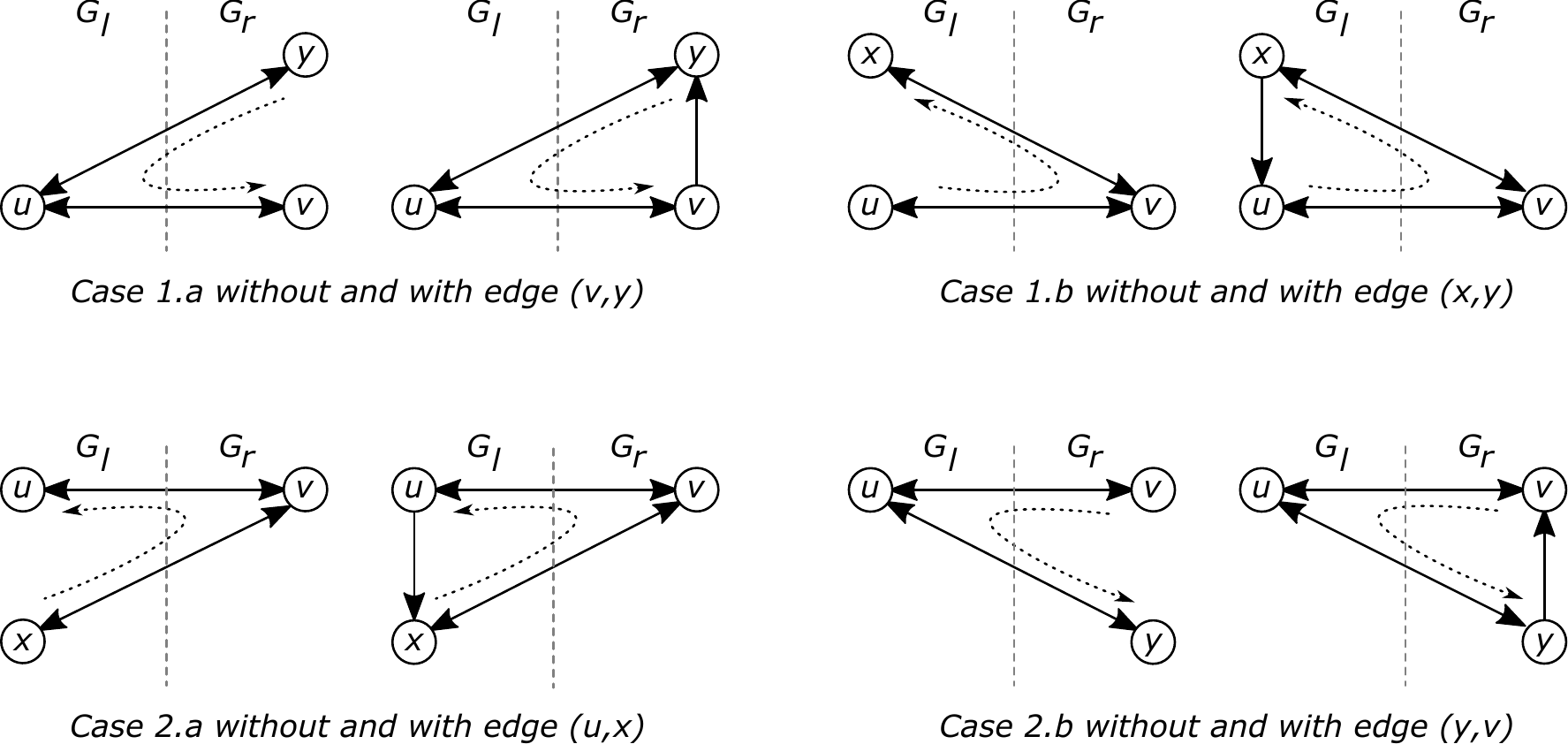}}

\caption{The four cases for the case that after a $\join$ operation between $G_l$ and $G_r$ there is an edge in $\GSRC$ between a vertex $u$ from $G_l$ and a vertex $v$ from $G_r$. The figures shows the edges in $G$. Since we consider a join operation, there is an edge from every vertex of $G_l$ to every vertex of $G_r$ in $G$ and vice versa.}
\label{fig:F02}
\end{figure}

We call a vertex $u \in V(G)$
\begin{itemize}
\item a {\em solitary vertex} of $G$ if there is a vertex $v \in V(G)$ such that $(u,v) \not\in E(G)$ and $(v,u) \not\in E(G)$,
\item an {\em in-vertex} in $G$ if there is a vertex $v \in V(G)$ such that $(v,u) \in E(G)$ and $(u,v) \not\in E(G)$,
\item an {\em out-vertex} in $G$ if there is a vertex $v \in V(G)$ such that $(u,v) \in E(G)$ and $(v,u) \not\in E(G)$,
\item an {\em in-out-vertex} in $G$ if $u$ is an in-vertex and an out-vertex in $G$.
\end{itemize}

A vertex $u$ can be a solitary, an in-vertex, an out-vertex and an in-out-vertex at the same time. 
If $u$ is an in-out-vertex, then it is also an in-vertex and an out-vertex. 

%
%
\begin{lemma}
\label{L05}
If node $w$ is labelled by $\join$, then two vertices $u \in V(G_l)$ and $v \in V(G_r)$ are adjacent in $\GSRC$ if and only if
\begin{enumerate}
\item $u$ is a solitary vertex in $G_l$ or
\item $v$ is a solitary vertex in $G_r$ or
\item $u$ is an in-out-vertex in $G_l$ or
\item $v$ is an in-out-vertex in $G_r$ or
\item $u$ is an in-vertex in $G_l$ and $v$ is an in-vertex in $G_r$ or
\item $u$ is an out-vertex in $G_l$ and $v$ is an out-vertex in $G_r$.
\end{enumerate}
\end{lemma}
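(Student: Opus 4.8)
The plan is to start directly from the Boolean characterisation of adjacency in $\GSRC$ derived in the text immediately preceding the lemma, namely that for $u \in V(G_l)$ and $v \in V(G_r)$ the pair $\{u,v\}$ is an edge of $\GSRC$ if and only if $(1.a \vee 1.b) \wedge (2.a \vee 2.b)$ holds, and then to simplify each of the four existential conditions using the fact that $w$ is a $\join$ node. Because the join inserts every arc between $V(G_l)$ and $V(G_r)$ in both directions, several of the arc-existence requirements hold automatically: in Cases 1.a and 2.b the witness $y$ must lie in $V(G_r)$ (with $(y,u)$ and $(u,y)$ present for free), while in Cases 1.b and 2.a the witness $x$ must lie in $V(G_l)$ (with $(v,x)$ and $(x,v)$ present for free), exactly as already observed in the text.

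First I would rewrite each of the four conditions as a purely internal statement about $G_l$ or $G_r$. Case 1.a becomes ``$\exists y \in V(G_r)$ with $(y,v) \notin E(G_r)$'', i.e.\ $v$ has a non-in-neighbour inside $G_r$; Case 2.b becomes ``$v$ has a non-out-neighbour inside $G_r$''; Case 1.b becomes ``$u$ has a non-out-neighbour inside $G_l$''; and Case 2.a becomes ``$u$ has a non-in-neighbour inside $G_l$''. Next I would translate each of these four missing-neighbour statements into the solitary/in/out vocabulary by a four-way case split on the arc pattern between the fixed vertex and each other vertex: for $u$ and a fixed $x \in V(G_l)$ exactly one of (both arcs, only $u\to x$, only $x\to u$, neither) occurs, and these witness respectively nothing, an out-vertex, an in-vertex and a solitary vertex. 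Reading off which patterns leave the relevant arc missing yields ``$u$ has a non-out-neighbour'' $\Leftrightarrow$ $u$ is an in-vertex or solitary in $G_l$, ``$u$ has a non-in-neighbour'' $\Leftrightarrow$ $u$ is an out-vertex or solitary in $G_l$, and symmetrically ``$v$ has a non-in-neighbour'' $\Leftrightarrow$ $v$ is an out-vertex or solitary in $G_r$, ``$v$ has a non-out-neighbour'' $\Leftrightarrow$ $v$ is an in-vertex or solitary in $G_r$.

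Substituting these equivalences, the adjacency condition becomes the Boolean formula $((o_v \vee s_v) \vee (i_u \vee s_u)) \wedge ((o_u \vee s_u) \vee (i_v \vee s_v))$, where $s,i,o$ abbreviate solitary, in-vertex and out-vertex (of $u$ in $G_l$, of $v$ in $G_r$). The final step is to expand and simplify. If $s_u$ or $s_v$ holds, then both conjuncts are true, giving Cases 1 and 2 of the lemma; otherwise the formula reduces to $(o_v \vee i_u) \wedge (o_u \vee i_v)$, whose distributive expansion is $(i_u \wedge o_u) \vee (i_v \wedge o_v) \vee (i_u \wedge i_v) \vee (o_u \wedge o_v)$, i.e.\ $u$ an in-out-vertex of $G_l$, $v$ an in-out-vertex of $G_r$, both in-vertices, or both out-vertices, which are exactly Cases 3--6.

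I expect the main obstacle to be bookkeeping rather than depth. One must ensure that the solitary case is absorbed correctly, since a solitary vertex satisfies every missing-neighbour statement and therefore forces both conjuncts to be true (this is what produces Cases 1 and 2), and one must recognise that in the distributive expansion the two mixed cross terms $o_v \wedge i_v$ and $i_u \wedge o_u$ collapse to the in-out-vertex Cases 4 and 3, while the two pure cross terms $o_u \wedge o_v$ and $i_u \wedge i_v$ give the both-out and both-in Cases 6 and 5. Verifying that these six disjuncts are reproduced exactly, with no omitted or spurious term, is the only genuinely delicate part of the argument; the rest is a direct consequence of the join structure and the already-derived characterisation of $\GSRC$.
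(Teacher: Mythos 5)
Your proof is correct and follows essentially the same route as the paper: both start from the four-condition characterisation of adjacency in $\GSRC$ derived just before the lemma, restrict the witnesses to $V(G_l)$ or $V(G_r)$ using the join structure, and translate each condition into the solitary/in-/out-vertex vocabulary. The only difference is organisational --- you prove a biconditional for each of the four sub-cases and finish by distributing the resulting Boolean formula, whereas the paper enumerates the four conjunctive combinations for one direction and the six lemma cases for the other; the underlying correspondences are identical.
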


\begin{proof}
$\Rightarrow$
\begin{enumerate}
\item If the conditions of Case 1.a and Case 2.a are satisfied, then either one of the two vertices $u$ and $v$ is a solitary vertex or $u$ and $v$ are both out-vertices in $G$.
\item If the conditions of Case 1.a and Case 2.b are satisfied, then either $v$ is a solitary vertex or an in-out-vertex in $G$.
\item If the conditions of Case 1.b and Case 2.a are satisfied, then either $u$ is a solitary vertex or an in-out-vertex in $G$.
\item If the conditions of Case 1.b and Case 2.b are satisfied, then either one of the two vertices $u$ and $v$ is a solitary vertex, or $u$ and $v$ are both in-vertices in $G$.
\end{enumerate}

$\Leftarrow$
\begin{enumerate}
\item
If $u$ is a solitary vertex in $G_l$, then the conditions of Case 1.b and Case 2.a are satisfied.
\item
If $v$ is a solitary vertex in $G_r$, then the conditions of Case 1.a and Case 2.b are satisfied.
\item
If $u$ is an in-out-vertex in $G_l$, then the conditions of Case 1.b and Case 2.a are satisfied.
\item
If $v$ is an in-out-vertex in $G_r$, then the conditions of Case 1.a and Case 2.b are satisfied.
\item
If $u$ is an out-vertex in $G_l$ and $v$ is an out-vertex in $G_r$, then the conditions of Case 1.a and Case 2.a are satisfied.
\item
If $u$ is an in-vertex in $G_l$ and $v$ is an in-vertex in $G_r$, then the conditions of Case 1.b and Case 2.b are satisfied.
\end{enumerate}
\end{proof}

A maximum clique in $\GSRC$ can now easily be computed with a bottom-up analysis of the co-tree $T$ for $G$. For this purpose, only four size values denoted by $m, s, i, o$ are required for each subtree $T_w$ of the co-tree $T$.
\begin{enumerate}
\item $m$ is the size of a maximum clique in $\GSRC|_{V(G_w)}$,
\item $s$ is the size of a maximum clique in $\GSRC|_{V(G_w)}$ with exclusively solitary or in-out vertices,
\item $i$ is the size of a maximum clique in $\GSRC|_{V(G_w)}$ with exclusively in-vertices and
\item $o$ is the size of a maximum clique in $\GSRC|_{V(G_w)}$ with exclusively out-vertices.
\end{enumerate}

The four size values $m,s,i,o$ for subtree $T_w$ can be easily computed from the four size values $m_l,s_l,i_l,o_l$ and the four size values $m_r,s_r,i_r,o_r$ for the left and right subtree $T_l$ and $T_r$ at root $w$, respectively.
In parallel to these computations, it is also possible to manage four vertex sets $M, S, O, I$ for maximum cliques in $G_w$ with any vertices, exclusively solitary or in-out vertices, exclusively in-vertices, and exclusively out-vertices, respectively. 

The size values $m, s, i, o$ for $G_w$ result from the maximum of several expressions formed from $m_l, s_l, i_l, o_l$ for $G_l$ and $m_r, s_r, i_r, o_r$ for $G_r$. Depending on which expression yields the maximum, the corresponding vertex sets $M, S, I, O$ for $G_w$ are assembled from the vertex sets $M_l, S_l, I_l, O_l$ for $G_l$ and the vertex sets $M_r, S_r, I_r, O_r$ for $ G_r $.

For a leaf node $w$ we have, $m=1, s=0, i=0, o=0$ and $M= \{\hat{w}\}, S= \emptyset, I= \emptyset, O= \emptyset$, where $\hat{w}$ is the vertex of $G$ associated with $w$.

If $w$ is labelled by $\union$, then
\begin{enumerate}
\item $m$ is the maximum of $m_l$ (let $M = M_l$) and $m_r$ (let $M = M_r$), 
\item $s$ is the maximum of $m_l$ (let $S = M_l$) and $m_r$ (let $S = M_r$),
\item $i$ is the maximum of $i_l$ (let $I = I_l$) and $i_r$ (let $I = I_r$) and
\item $o$ is the maximum of $o_l$ (let $O = O_l$) and $o_r$ (let $O = O_r$).
\end{enumerate}
This operation does not create any edges in $\GSRC$. All vertices become solitary vertices.

If $w$ is labelled by $\rjoin$, then
\begin{enumerate}
\item $m$ is the maximum of $m_l$ (let $M = M_l$) and $m_r$ (let $M = M_r$),
\item $s$ is the maximum of $s_l$ (let $S = S_l$), $s_r$ (let $S = S_r$), $i_r$ (let $S = I_r$) and $o_l$ (let $S = O_l$),
\item $i$ is the maximum of $m_r$ (let $I = M_r$) and $i_l$ (let $I = I_l$) and
\item $o$ is the maximum of $m_l$ (let $O = M_l$) and $o_r$ (let $O = O_r$).
\end{enumerate}
This operation also does not create any edges in $\GSRC$. The in-vertices of $G_l$ and the out-vertices of $G_r$ become in-out-vertices in $G_w$. All vertices of $G_l$ become out-vertices in $G_w$, all vertices of $G_r$ become in-vertices in $G_w$.

If $w$ is labelled by $\join$, then
\begin{enumerate}
\item $m$ is the maximum of $m_l+s_r$ (let $M=M_l \cup S_r$), $m_r+s_l$ (let $M=M_r \cup S_l$), $i_l+i_r$ (let $M=I_l \cup I_r$) and $o_l+o_r$ (let $M=O_l \cup O_r$),
\item $s= s_l+s_r$ (let $S=S_l \cup S_r$), 
\item $i= i_l+i_r$ (let $I=I_l \cup I_r$) and
\item $o= o_l+o_r$ (let $O=O_l \cup O_r$).
\end{enumerate}

This fully describes the bottom-up processing of the co-tree $T$ for the computation of a strong resolving basis for co-graph $G$. 

\begin{algorithm}
	\caption{Directed Strong Metric Dimension}
	\label{alg:directed}
	\begin{algorithmic}[1]
		\Function{dsmd\_di\_co-graph}{directed co-graph $G$}
			\State $T\gets$ binary co-tree of $G$
			\State $n\gets \vert V(G)\vert$
			\State $(m, s, i, o) \gets$ \Call{clique\_vector}{$T$}
			\State\Return $n - m$
		\EndFunction
		\End
		\Statex
		\Function{clique\_vector}{co-tree $T$}
			\State $w\gets$ root of $T$
			\If{$w$ is a leaf}
				\State\Return $(1,0,0,0)$
			\Else
				\State $T_l \gets$ left subtree of $w$
				\State $(m_l, s_l, i_l, o_l) \gets$ \Call{clique\_vector}{$T_l$}
				\State $T_r \gets$ right subtree of $w$
				\State $(m_r, s_r, i_r, o_r) \gets$ \Call{clique\_vector}{$T_r$}
				\If{$w$ is labelled $\union$}
					\State\Return $(\max\{m_l, m_r\}, \max\{m_l, m_r\}, \max\{i_l, i_r\}, \max\{o_l, o_r\})$
				\ElsIf{$w$ is labelled $\rjoin$}
					\State\Return $(\max\{m_l, m_r\}, \max\{s_l, s_r, i_r, o_l\}, \max\{m_r, i_l\}, \max\{m_l, o_r\})$
				\Else \Comment $w$ is labelled $\join$
					\State\Return $(\max\{m_l+s_r, m_r+s_l, i_l+i_r, o_l+o_r\}, s_l+s_r, i_l+i_r, o_l+o_r)$
				\EndIf
				\End
			\EndIf
			\End
		\EndFunction
		\End
	\end{algorithmic}
\end{algorithm}

Algorithms 2 shows such a bottom-up processing of the co-tree $T$, in that the size values $m,s,i,o$ are computed.
The algorithm can easily be extended to compute a minimum strong resolving set by keeping track of the vertex sets $M,S,I,O$. Since a binary co-tree of a directed co-graph can be found in linear time \cite{CP06}, we have proved the following theorem.

%
%
\begin{theorem}
A strong metric basis of a strongly connected directed co-graph is computable in linear time.
\end{theorem}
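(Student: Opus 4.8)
The plan is to reduce the computation of a strong metric basis to finding a maximum clique of $\GSRC$, and then to argue that Algorithm \ref{alg:directed} performs exactly this computation in linear time, augmented by a traceback that recovers an explicit clique. By the theorem $\SMD(G) = \tau(\GSR) = n - \alpha(\GSR) = n - \omega(\GSRC)$, a set $R \subseteq V(G)$ is a minimum strong resolving set if and only if its complement $V(G) \setminus R$ is a maximum clique of $\GSRC$. Hence it suffices to produce a maximum clique $M$ of $\GSRC$ in linear time and return $V(G) \setminus M$.

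First I would prove correctness of the four-value recurrence by structural induction on the co-tree $T$. The claim is that for every node $w$ the values $(m,s,i,o)$ returned by \textsc{clique\_vector} are the sizes of a maximum clique of $\GSRC|_{V(G_w)}$ taken over, respectively, all vertices, only solitary-or-in-out vertices, only in-vertices, and only out-vertices of $G_w$. The base case of a leaf is immediate. For the inductive step one distinguishes the label of $w$. If $w$ is labelled $\union$ or $\rjoin$, then Lemma \ref{L04} guarantees that $\GSRC$ has no edge between $V(G_l)$ and $V(G_r)$, so any clique of $\GSRC|_{V(G_w)}$ lies entirely on one side; this yields the $\max$-of-the-two-sides form of the recurrences, while the reclassification of vertex types stated just before the algorithm (all vertices become solitary after a union; the in-vertices of $G_l$ and the out-vertices of $G_r$ become in-out-vertices, and every vertex of $G_l$/$G_r$ becomes an out-/in-vertex after a directed join) accounts for which of $m_\bullet, s_\bullet, i_\bullet, o_\bullet$ feed into $s, i, o$. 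If $w$ is labelled $\join$, then Lemma \ref{L05} lists exactly the six situations producing a cross edge in $\GSRC$; since in-vertices are adjacent to in-vertices, out-vertices to out-vertices, and solitary-or-in-out vertices to every vertex on the other side, these classes combine additively, giving $s=s_l+s_r$, $i=i_l+i_r$, $o=o_l+o_r$, while $m$ is the best of ``one whole side plus the solitary/in-out clique of the other side'' against a pure in-clique $i_l+i_r$ or a pure out-clique $o_l+o_r$.

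Linear running time of the size computation is then straightforward: a binary co-tree $T$ of $G$ can be built in linear time by \cite{CP06}, it has $O(n)$ nodes, and \textsc{clique\_vector} performs a constant number of comparisons and additions per node. Thus the value $m$ at the root, which equals $\omega(\GSRC)$, and hence $\SMD(G) = n - m$, are obtained in linear time.

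The main obstacle is to output an \emph{explicit} maximum clique, and therefore an explicit basis, within the same linear bound: maintaining the vertex sets $M,S,I,O$ literally and taking their unions at every node would copy vertices repeatedly and cost superlinear time. I would resolve this with the standard dynamic-programming traceback. During the bottom-up pass each node additionally records, in $O(1)$ space, which maximizing expression achieved each of its four values. A single top-down traversal then reconstructs a maximum clique $M$: starting at the root with the value $m$, it descends only into the subtrees and vertex-classes dictated by the stored choices, emitting each selected leaf once. This traversal visits every node at most once and emits every vertex at most once, so it runs in linear time, and $V(G)\setminus M$ is the desired strong metric basis.
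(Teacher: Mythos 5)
Your proposal is correct and follows essentially the same route as the paper: reduce to $\omega(\GSRC)$ via $\SMD(G)=\tau(\GSR)=n-\omega(\GSRC)$, establish the four-value recurrence at $\union$, $\rjoin$ and $\join$ nodes by Lemmas \ref{L04} and \ref{L05}, and evaluate it bottom-up on a binary co-tree obtained in linear time. The only difference is that where the paper simply maintains the vertex sets $M,S,I,O$ alongside the size values, you recover the clique by an $O(1)$-per-node traceback; this is a slightly more careful justification of the linear bound for outputting an explicit basis, but not a different approach.
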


%
%
\begin{figure}
\centerline{\includegraphics[width=380pt]{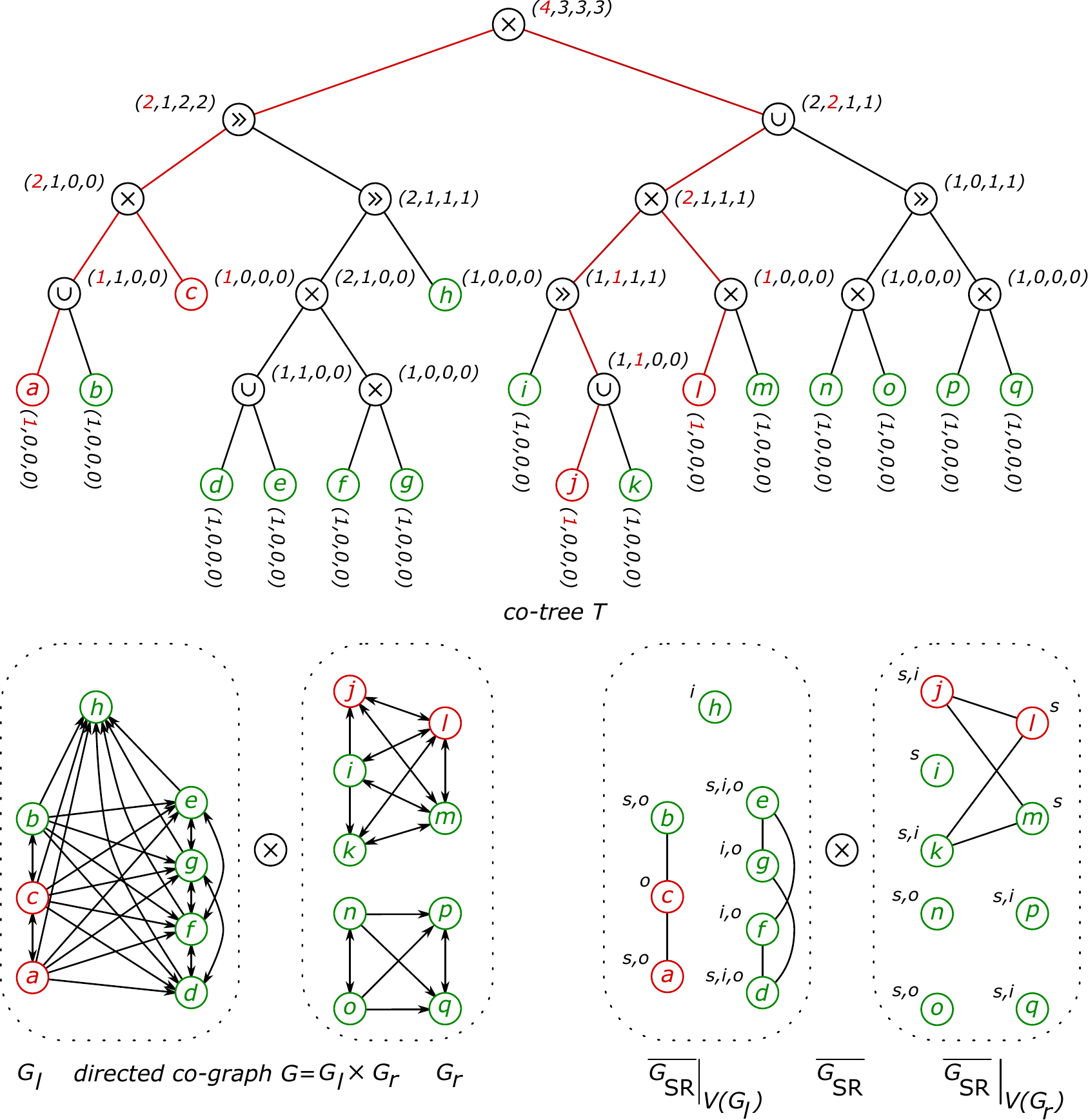}}

\caption{The figure shows a co-tree $T$ with the size values at the nodes calculated by Algorithm \ref{alg:directed}. The figure also shows bottom left the co-graph $G$ and bottom right the complement $\GSRC$ of the strong resolving graph $\GSR$.
The set of red vertices $\{a,c,j,l\}$ forms a maximum clique in $\GSRC$. There are several options to select a clique with four vertices in $\GSRC$. The set of green vertices $\{b,d,e,f,g,h,i,k,m,n,o,p,q\}$ is a strong resolving basis for the directed co-graph $G$. Let $G_l$ and $G_r$ be the graphs defined by the subtrees left and right at the root of $T$. Then $G = G_l \join G_r$. In $G_l$ the vertices $a,b,d,e$ are solitary vertices, $d,e,f,g,h$ are in-vertices and $a,b,c,d,e,f,g$ are out-vertices. In $G_r$, the vertices $i,j,k,l,m$ are solitary vertices, $i,j,p,q$ are in-vertices and $n,o$ are out-vertices. Graph $\GSRC$ is formed from the disjoint union of $\GSRC|_{V(G_l)}$ and $\GSRC|_{V(G_r)}$ and the edges between $u \in V(G_l)$ and $v \in V(G_r)$, where $u$ or $v$ is a solitary vertex or in-out-vertex, or $u$ and $v$ are both in-vertices, or both are out-vertices, see Lemma \ref{L05}. In the example all edges between vertices of $G_l$ and $G_r$ are inserted to get $\GSRC$. For the sake of clarity, these edges are not shown in the figure.
}
\label{fig:exampleAlgDirected}
\end{figure}

\section{Conclusions}
In this paper we have shown that {\sc Strong Metric Dimension} for undirected and directed co-graphs is decidable in linear time. We even presented linear time algorithms for the computation of minimum strong resolving sets for undirected and directed co-graphs.

The strong metric dimension as well as the directed strong metric dimension have not yet been extensively discussed in the literature. Developing efficient algorithms to compute the strong metric dimension for specific graph classes (both undirected and directional) is one of the most interesting challenges for us.

\bibliography{Bibliography/bibliography_abbreviations.bib,Bibliography/bibliography_references.bib,Bibliography/bibliography_proceedings.bib}

\end{document}